\documentclass[journal]{IEEEtran}

\usepackage{amsmath,amssymb,amsfonts,amsthm}
\usepackage{mathtools}
\usepackage{bm}
\usepackage{graphicx}
\usepackage{booktabs}
\usepackage[table]{xcolor}
\usepackage{colortbl}
\usepackage{algorithm}
\usepackage{algpseudocode}
\usepackage{dsfont}
\usepackage{microtype}
\usepackage{hyperref}
\usepackage{tikz}
\usetikzlibrary{positioning,calc}

\usepackage{xcolor}

\usepackage{url}
\usepackage{hyperref}
\hypersetup{colorlinks,allcolors=black}

\newtheorem{lemma}{Lemma}

\newtheorem{example}{Example}

\newcommand{\ie}{{\it i.e.}}
\newcommand{\eg}{{\it e.g.}}

\begin{document}


\title{Learning to Decode Quantum LDPC Codes Via Belief Propagation}

\author{
Mohsen~Moradi\textsuperscript{\href{https://orcid.org/0000-0001-7026-0682}{\includegraphics[scale=0.06]{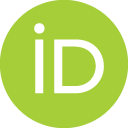}}}, 
Vahid~Nourozi\textsuperscript{\href{https://orcid.org/0000-0002-6977-865X}{ \includegraphics[scale=0.06]{figs/ORCID}}} ,
Salman~Habib\textsuperscript{\href{https://orcid.org/0000-0003-0699-5312}{\includegraphics[scale=0.06]{figs/ORCID}}}
and 
David~G.~M.~Mitchell\textsuperscript{\href{https://orcid.org/0000-0002-3544-9225}{ \includegraphics[scale=0.06]{figs/ORCID}}}%
\thanks{Mohsen Moradi, Vahid Nourozi, and David G. M. Mitchell are with the Klipsch School of Electrical and Computer Engineering, New Mexico State University, Las Cruces, NM 88003, USA (e-mail: moradi23@nmsu.edu; nourozi@nmsu.edu; dgmm@nmsu.edu).}%
\thanks{Salman Habib is with the Dept. of Electrical and Computer Engineering, Texas A \& M University, Texarkana, TX 75503, USA (e-mail: shabib@tamut.edu).}%
}

\maketitle

\begin{abstract}
Belief-propagation (BP) decoding for quantum low-density parity-check (QLDPC) codes is appealing due to its low complexity, yet it often exhibits convergence issues due to quantum degeneracy and short cycles that exist in the Tanner graph.
To overcome this challenge, this paper proposes a reinforcement-learning (RL) approach that learns (offline) how to decode QLDPC codes based on sequential decoding trajectories. The decoding is formulated as a Markov decision process with a local, syndrome-driven state representation of the underlying RL agent. To enable fast inference, critical for practical implementation, we incrementally update our RL-based QLDPC decoder using second-order neighborhoods that avoid global rescans. 
Simulation results on representative QLDPC codes demonstrate the superiority of the proposed RL-based QLDPC decoders in terms of performance and convergence speed when compared to flooding and random sequential schedules, while achieving performance competitive with state-of-the-art BP-based decoders at comparable complexity. 
\end{abstract}

\begin{IEEEkeywords}
Quantum error correction, quantum LDPC codes, CSS codes, stabilizer codes, syndrome decoding, degeneracy, Pauli noise, depolarizing channel, belief propagation, reinforcement learning, sequential scheduling, cluster-based scheduling.
\end{IEEEkeywords}

\section{Introduction}
Quantum error correction is essential for reliable quantum computing, since even small physical error rates can accumulate and corrupt computations without mitigation \cite{kuo2022exploiting, shor1995scheme, steane1996error}. 
Quantum low-density parity-check (QLDPC) codes have emerged as a promising class of codes for fault tolerance because they can protect multiple logical qubits with sparse stabilizer constraints. 
In particular, recent advances in QLDPC constructions 
have produced codes with constant rate and linear growth of minimum distance, making them increasingly attractive for scalable quantum architectures \cite{tillich2013quantum, panteleev2022asymptotically, breuckmann2021balanced, bravyi2024high}.
However, the decoding of QLDPC codes poses significant challenges. Standard belief propagation (BP) decoding of classical LDPC codes does not result in good performance for QLDPC decoding due to the presence of numerous short cycles, which violate BP’s independence assumptions, and the phenomenon of {\it degeneracy} \cite{kuo2022exploiting}, where multiple distinct error patterns yield the same syndrome, creating symmetric \emph{pseudo-codewords} that can confuse the decoder. As a result, applying a conventional BP decoder for QLDPC codes typically fails to converge, frequently oscillating or getting trapped in a wrong error coset, especially at low error rates. 

Techniques attempting to maintain the low complexity of BP decoding and achieve improved performance for QLDPC codes has drawn a lot of attention \cite{poulin2008iterative, fuentes2021degeneracy,roffe2020decoding, miao2025quaternary}; in particular, techniques to address convergence failures. Hybrid post-processing decoders augment BP with a secondary decoding stage to correct its mistakes. For example, BP with ordered statistics decoding (BP-OSD) \cite{panteleev2021degenerate} runs BP first, then applies an OSD that solves a system of linear equations to find an error consistent with the syndrome, improving the convergence probability. Various modifications to this have been proposed, see, e.g., \cite{roffe2023bias}. Likewise, BP with stabilizer inactivation (BP-SI) temporarily disables or \emph{inactivates} certain check nodes (CNs) during BP when they are suspected of causing trapping sets, thereby breaking the symmetry caused by degenerate errors \cite{du2022stabilizer}. These hybrid strategies can close much of the performance gap of plain BP, but they come at the cost of substantially higher computation (e.g. Gaussian elimination for OSD). 
{More recently, \cite{hillmann2025localized} proposes BP with localized statistics decoding (BP-LSD), which achieves performance comparable to BP-OSD while reducing post-processing cost: instead of performing a single global elimination over the entire decoding graph, LSD restricts the linear solves to multiple small, independently processed subgraphs (which can also be parallelized)}.




Researchers have also investigated modifying the BP process itself. Overcomplete-check BP and its quaternary neural extension provide a low-latency, high-performance decoding framework \cite{miao2023quaternary}. 
Another work that uses the BP message-passing process itself to improve convergence is BP guided decimation (BPGD) \cite{yao2023bpgd}. BPGD interleaves fixed runs of BP with decimation steps that \emph{freeze} the most reliable variable nodes (VNs) (\ie, fix certain qubit error values to 0 or 1 based on high-confidence beliefs) to inject asymmetry into the decoding. By sequentially fixing VNs, BPGD shrinks the solution space and avoids the oscillations of loopy BP, reducing the failure rate due to non-convergence. This decimation approach can achieve error-rate performance on par with BP-OSD and BP-SI, without the need for expensive linear-algebra post-processing. Variants of this idea incorporating soft reliability thresholds and partial decimation have further improved BP’s efficacy at the cost of some complexity \cite{alinia2025decimation}. 

In~\cite{moradi2026seq_qldpc}, flooding BP is replaced by sequential BP with random scheduling, which was shown to further improve the decoder performance in terms of both convergence speed and error performance. These results suggest that injecting sequential asymmetry into BP (either by decimation or scheduling) is a powerful strategy to combat the traps caused by degeneracy. In classical coding theory, it is well known that the schedule of message updates in BP can influence its performance. Instead of the conventional flooding schedule (updating all nodes in parallel each round), one can use a sequential (serial) schedule that updates one node at a time in some order. Even a simple randomized sequential update (often called shuffled BP) can mitigate the adverse effects of short cycles and improve convergence for classical LDPC decoders \cite{zhang2005shuffled}. Researchers have subsequently cast schedule optimization as a learning problem, using reinforcement learning (RL) agents to discover effective update sequences automatically \cite{ carpi2019reinforcement, habib2021belief, moradi2025enhancing}.

Random scheduling has been explored for decoding QLDPC codes in \cite{moradi2026seq_qldpc}. It was demonstrated there that processing QLDPC VNs or CNs in a fixed sequential order can stabilize BP and reduce stall patterns. For instance, a sequential BP decoder that updates each CN or VN one-by-one (instead of all-at-once) achieves a lower frame error rate (FER) than standard flooding BP on several QLDPC benchmarks for a moderate cap on the number of iterations. In fact, a sequential update strategy is shown to outperform a BP-OSD-0 decoder for some QLDPC codes, at roughly the same average decoding complexity as BP decoding. These findings underscore that the order of message updates is crucial for breaking the symmetric failures of BP on QLDPC codes. 

In this paper, we propose an RL framework for \emph{learning} effective sequential
schedules for BP-based decoding of QLDPC codes. Our main contributions are as follows:

\begin{itemize}
    \item We first establish the RL-based decoding approach in the context of Calderbank-Shor-Steane (CSS) codes for an independent Pauli
    $X$ channel by casting the decoder scheduling as a Markov decision process with a \emph{local, syndrome-driven} state, where
    for each candidate VN, the state is determined by the residual mismatch pattern on its adjacent CNs,
    which remains small because the Tanner graph is sparse. A Q-learning agent then selects VN
    updates \emph{without replacement} within each BP iteration, enabling the decoder to inject controlled asymmetry
    purely through the update order, while maintaining the underlying sequential VN scheduling (SVNS) message-update rule; 
    \item To make the learned
    policy practical at block-lengths of interest, we develop a fast inference implementation that incrementally maintains
    the residual syndrome, local states, and scheduling priorities using only the \emph{second-order neighborhood} affected by
    a hard-decision flip, avoiding global rescans and reducing overhead; 
    \item We extend our proposed learning-based scheduling to depolarizing noise by using a two-stream (quaternary-coupled) SVNS update and quaternary hard decisions. 
\end{itemize}



\noindent Numerical results demonstrate the superiority of
the proposed RL-based QLDPC decoders in terms of performance
and convergence speed when compared to flooding and random
sequential schedules, while achieving performance competitive
with state-of-the-art BP-based decoders at comparable complexity. Finally, we note that our method is modular, \ie, the proposed learned sequential schedules can be combined with other techniques providing a further way to improve both convergence behavior and end-to-end decoding performance.

The remainder of this paper is organized as follows.
Section~\ref{sec:background} reviews the necessary background on stabilizer and CSS QLDPC codes, syndrome-based decoding,
and BP/SVNS message passing.
In Section~\ref{sec:RL_SVNS}, we introduce our RL formulation for state-dependent SVNS and establish the required notation.
Section~\ref{sec:fast} presents an efficient implementation based on incremental/local updates to reduce inference-time
complexity.
Section~\ref{sec:rl_svns_depolarizing} extends the RL scheduling framework to the depolarizing channel and develops the corresponding notation and update rules.
Section~\ref{sec:numerical} reports numerical results on several standard QLDPC codes and compares our methods with state-of-the-art decoders.
Finally, Section~\ref{sec:conclusion} concludes the paper.

\section{Background}\label{sec:background}

\subsection{Stabilizer and CSS QLDPC Codes}
Quantum stabilizer codes are the quantum analog of classical linear block codes. They are specified by a set of commuting
Pauli operators (stabilizer generators) whose joint $+1$ eigenspace defines the codespace \cite{kuo2022exploiting}.
Each stabilizer acts as a parity-check measurement on a subset of qubits, and commutativity ensures that all stabilizers
can be measured simultaneously without disturbing the encoded logical information.

A stabilizer code with $m$ generators can be represented by a binary \emph{symplectic} matrix
$H=[\,H_X\;H_Z\,]\in\{0,1\}^{m\times 2n}$, where the $i$-th row specifies the $X$-support and $Z$-support
of the $i$-th stabilizer generator.
The commutativity of all stabilizers is equivalent to the symplectic orthogonality condition
\[
H_XH_Z^\top + H_ZH_X^\top = 0 \quad \text{over } \mathbb{F}_2.
\]
In this paper, we restrict attention to CSS codes \cite{calderbank1996good}, a widely used subclass of stabilizer codes. A CSS code splits into $X$-type and $Z$-type checks. In the CSS
construction adopted here, we use
\[
H_X=\begin{bmatrix}0\\ G_2\end{bmatrix},\qquad
H_Z=\begin{bmatrix}H_1\\ 0\end{bmatrix},
\]
where $H_1\in\{0,1\}^{(n-k_1)\times n}$ is the parity-check matrix of a classical code $C_1$
and $G_2\in\{0,1\}^{k_2\times n}$ is the generator matrix of another classical code $C_2$.
The commutativity condition reduces to $G_2H_1^\top=0$ (equivalently $C_2\subseteq C_1$).

\subsection{Syndrome decoding and degeneracy (X-only channel)}
In this paper we first focus on the independent Pauli-$X$ error channel for conceptual clarity.
In this setting the physical error is a binary vector $\bm{e}\in\{0,1\}^n$ (bit flips on $n$ qubits).
The relevant syndrome component is
\[
\bm{s} \triangleq H_1 \bm{e} \in\{0,1\}^{m_1},\qquad m_1=n-k_1,
\]
A decoder outputs an estimate $\hat{\bm{e}}$ such that
\[
H_1 \hat{\bm{e}} \equiv \bm{s} \pmod 2.
\]

A key difference from classical decoding is \emph{degeneracy}, where multiple distinct physical error patterns can produce the
same syndrome when they differ by stabilizers \cite{kuo2022exploiting}. In the present $X$-only CSS setting, if two error
vectors differ by a codeword in $\ker(H_1)$, they are syndrome-equivalent and correspond to the same error coset. This
degeneracy can create many competing solutions for the syndrome with similar likelihood, which complicates BP and can lead to decoding failure.

Importantly, matching the measured syndrome is not sufficient for successful quantum decoding.
Even if the decoder outputs an estimate $\hat{\bm{e}}$ such that $H_1\hat{\bm{e}}=\bm{s}$, the correction can still induce a
\emph{logical error} if $\hat{\bm{e}}$ lies in a different logical coset than the true error $\bm{e}$ (\ie, if the residual
difference $\hat{\bm{e}}\oplus\bm{e}$ contains a nontrivial logical component rather than being generated solely by stabilizers).
In contrast, we call it a \emph{decoder failure} when the algorithm terminates without producing any estimate $\hat{\bm{e}}$
that satisfies $H_1\hat{\bm{e}}=\bm{s}$ within its iteration or time budget.
The main performance metric is the frame error rate (FER), defined as the probability that decoding is not successful, i.e.,
\[
\text{FER}\;\triangleq\;\Pr\!\big(\text{logical error}\ \text{or decoder failure}\big).
\]

\subsection{Residual Mismatch}
We define the residual mismatch vector
\[
\bm{\delta} \triangleq \bm{s} \oplus (H_1\hat{\bm{e}})\in\{0,1\}^{m_1}.
\]
We say that CN $j$ is {\it satisfied} if $\delta_j=0$ and unsatisfied if $\delta_j=1$. We define the mismatch weight
\[
w\triangleq\|\bm{\delta}\|_1=\sum_{j=1}^{m_1}\delta_j.
\]
This residual view is convenient for sequential decoding and will also serve as the basis for our RL state design.

\subsection{Sequential BP Decoding}\label{subsec:SVNS}
{We represent $H_1$ with a bipartite Tanner graph $G_{H_1}=(V\cup C,E)$ with
$
V=\{v_1,\ldots,v_n\}$ and $ C=\{c_1,\ldots,c_m\},$
where each VN $v_i$ corresponds to qubit $i$ (and the error $e_i$), and each check node $c_j$ corresponds to the $j$th parity-check equation (and syndrome bit $s_j$). An edge connects $v_i$ and $c_j$ precisely when qubit $i$ participates in check $j$, i.e.,
$E=\{(v_i,c_j): H_{1}(j,i)=1\}.$
We use $\mathcal{N}(c_j)=\{i\;|\;(v_i,c_j)\in E\}$ to denote the VN neighbors of check $c_j$
and $\mathcal{N}(v_i)=\{j\;|\;(v_i,c_j)\in E\}$ to denote the CN neighbors of variable $v_i$.}
BP is an iterative message-passing heuristic that operates on $G_{H_1}$.
Under an independent and identically distributed (i.i.d.)\ bit-flip model with flip probability $p_x$ (corresponding to i.i.d. Pauli $X$ errors), each VN is initialized with the prior LLR
\[
\mu_x=\log\!\Big(\frac{1-p_x}{p_x}\Big).
\]
BP then iteratively exchanges messages between checks and variables to refine beliefs about $\bm{e}$.

In standard \emph{flooding} BP, all CN updates are computed in parallel and then all VN updates are computed in parallel
within each iteration. In contrast, \emph{sequential} (serial) schedules update one node at a time and immediately use the
new information in subsequent updates within the same iteration. Such schedules can break symmetries, propagate reliable information
faster, and sometimes mitigate trapping/oscillation behavior.

In this work we use {\it sequential variable node scheduling} (SVNS): instead of updating all VNs in parallel, we pick one
VN $v_i$ at a time and apply one local BP-style update as in \cite{moradi2026seq_qldpc}. We maintain variable beliefs as log-likelihood ratios (LLRs) $L_i$ and edge messages
$m_{i\to j}$ on each Tanner-graph edge $(v_i, c_j)$. 
A single SVNS update at a VN $v_i$ has the following steps:
\begin{enumerate}
\item Compute check-to-variable messages $m_{j\to i}$ for each neighbor $j\in\mathcal{N}(v_i)$ using the standard BP $\mathrm{tanh}$-product rule
with the syndrome sign $(-1)^{s_j}$ as
\begin{equation}\label{eq:c2v}
m_{j\to i}
=
2\,\mathrm{atanh}\!\Bigg(
(-1)^{s_j}\prod_{i'\in\mathcal{N}(c_j)\setminus\{i\}} \tanh\!\big(\tfrac{m_{i'\to j}}{2}\big)
\Bigg);
\end{equation}

\item Update the belief at VN $v_i$ as
\[
L_i \leftarrow \mu_x + \sum_{j\in\mathcal{N}(v_i)} m_{j\to i};
\]

\item Update outgoing VN to CN messages as
\[
m_{i\to j} \leftarrow L_i - m_{j\to i},\qquad \forall j\in\mathcal{N}(v_i);
\]

\item Update the hard decision as
\[
\hat e_i \leftarrow \mathds{1}\!\left[L_i<0\right]\in\{0,1\};
\]
\end{enumerate}
We refer the reader to \cite{moradi2026seq_qldpc} for a detailed review of the SVNS decoding algorithm.

\subsection{Motivation for RL-based Scheduling}
Although sequential schedules can improve BP performance, the best update order depends strongly on the specific syndrome instance
and the evolving decoder state. Fixed schedules (deterministic or random) may therefore be suboptimal across different errors.
This motivates learning a \emph{state-dependent} scheduling policy. In the next sections, we formulate VN scheduling as an RL problem and learn which VN to update next based on local, syndrome-driven features, while keeping the SVNS
message-update rule itself unchanged.

\section{RL Formulation for Scheduling VN Updates}\label{sec:RL_SVNS}
We keep the SVNS update rule fixed and learn \emph{which VN to update next}.
The following lemma shows that if one bit of the current estimated error vector flips during an SVNS update, then the residual mismatch bits of the CNs connected to that VN also flip.

\begin{lemma}\label{lem:delta_flip}
For a measured syndrome $\bm{s}$, and the current hard estimate $\hat{\bm{e}}$ with the residual mismatch vector
$
\bm{\delta} = \bm{s} \oplus (H_1 \hat{\bm{e}}),
$
if we flip the $i$-th bit of the hard estimate $\hat{\bm{e}}$
then the updated residual mismatch bits flip exactly on the neighboring checks of VN~$i$.
\end{lemma}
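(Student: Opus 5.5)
The plan is to write the bit flip as addition of a unit vector over $\mathbb{F}_2$ and use linearity of the syndrome map. Let $\bm{u}_i\in\{0,1\}^n$ denote the $i$-th standard basis vector. Flipping the $i$-th bit of $\hat{\bm{e}}$ produces $\hat{\bm{e}}' = \hat{\bm{e}}\oplus \bm{u}_i$. The updated residual mismatch is then
\[
\bm{\delta}' = \bm{s}\oplus H_1\hat{\bm{e}}' = \bm{s}\oplus H_1\hat{\bm{e}}\oplus H_1\bm{u}_i = \bm{\delta}\oplus \bm{h}_i ,
\]
where $\bm{h}_i = H_1\bm{u}_i$ is the $i$-th column of $H_1$. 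This step only uses that matrix–vector multiplication over $\mathbb{F}_2$ distributes over addition.

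Next I identify the support of $\bm{h}_i$ with the Tanner-graph neighborhood. By the definition of the edge set $E$, we have $H_1(j,i)=1$ if and only if $(v_i,c_j)\in E$, that is, if and only if $j\in\mathcal{N}(v_i)$. It follows componentwise that $\delta'_j = \delta_j\oplus 1$ for $j\in\mathcal{N}(v_i)$ and $\delta'_j=\delta_j$ otherwise. This is exactly the claim that the residual bits flip on, and only on, the neighboring checks of VN $i$.

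There is no real obstacle here. The one point needing care is the word \emph{exactly}: it requires showing both that every neighbor flips and that no non-neighbor changes. The componentwise statement above handles both directions at once. As a consequence, I would also record that the mismatch weight changes to $w' = w + |\mathcal{N}(v_i)| - 2\,|\{j\in\mathcal{N}(v_i):\delta_j=1\}|$, since that form is what the subsequent RL state design will use.
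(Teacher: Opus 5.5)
Your proof is correct and matches the paper's argument: write the flip as $\hat{\bm{e}}\oplus\mathbf{e}_i$, use $\mathbb{F}_2$-linearity to get $\bm{\delta}'=\bm{\delta}\oplus H_1\mathbf{e}_i$, and observe that the $i$-th column of $H_1$ is supported exactly on $\mathcal{N}(v_i)$. Your extra weight-change formula is also correct, and it is equivalent to the paper's per-check update $w\leftarrow w+(2\delta_j-1)$, where $\delta_j$ is the new value.
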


\begin{proof}
Let $\mathbf{e}_i=(0,0,\ldots,e_i=1,0,\ldots,0)^\top\in\{0,1\}^n$
denote the $i$-th unit vector and define the flipped hard estimate
\[
\hat{\bm{e}}' \triangleq \hat{\bm{e}} \oplus \mathbf{e}_i .
\]
By linearity over $\mathbb{F}_2$,
\[
H_1 \hat{\bm{e}}' \;=\; H_1(\hat{\bm{e}} \oplus \mathbf{e}_i)
\;=\; (H_1\hat{\bm{e}})\oplus(H_1\mathbf{e}_i).
\]
The updated residual mismatch vector is
\[
\bm{\delta}' \triangleq \bm{s} \oplus (H_1\hat{\bm{e}}')
= \bm{s} \oplus (H_1\hat{\bm{e}})\oplus(H_1\mathbf{e}_i)
= \bm{\delta} \oplus (H_1\mathbf{e}_i).
\]
Note that $H_1\mathbf{e}_i$ selects the $i$-th column of $H_1$. Therefore, for each
check index $j$,
\[
\delta'_j = \delta_j \oplus H_1(j,i),
\]
so $\delta'_j$ flips if and only if $H_1(j,i)=1$, i.e., exactly for the neighboring checks of VN~$v_i$, \ie, for $j \in \mathcal{N}(v_i)$.
\end{proof}

We use this lemma to show that the updates under the learned sequential schedule are incremental and local.
The components of our RL scheme are as follows:
\begin{itemize}
\item \textbf{Episode:} sample $p_x$ uniformly from a training grid $\mathcal P_{\text{train}}$ (in the numerical results we use $P_{\text{train}} = \{0.03, 0.04, 0.05, 0.06, 0.07\}$), draw $\bm{e}\sim\text{Bernoulli}(p_x)$, compute $\bm{s}=H_1\bm{e}$, and run SVNS with RL scheduling for up to $I_{\max}$ iterations;

\item \textbf{Environment:} the full decoder configuration: $\{m_{i\to j}\}$, $\{L_i\}$, $\hat{\bm{e}}$, and $\bm{\delta}$;

\item \textbf{State:} for each VN $v_i$, the local state $\sigma_i$ is computed from $\bm{\delta}$ restricted to the index set $\mathcal{N}(v_i)$;

\item \textbf{Agent Action:} choose which VN index $i$ to update next. Inside each BP iteration, schedule the VNs \emph{without replacement}, \ie, each VN is updated at most once per iteration maintaining a \emph{remaining set}.\footnote{We have also run learning with no restriction from the remaining set which produced approximately equal numerical results.}
\end{itemize}

For each VN $v_i$, let $A_i=|\mathcal{N}(v_i)|$ and we define $A_{\max}\triangleq \max_i A_i$.
We build a local binary state vector from residual mismatch bits of its neighbor checks as
\[
\bm{b}_i=[\,\delta_{j_1},\,\delta_{j_2},\,\ldots,\,\delta_{j_{A_i}},\,0,\ldots,0\,]\in\{0,1\}^{A_{\max}},
\]
where $(j_1,\dots,j_{A_i})$ is a fixed deterministic neighbor order (e.g., sorted check indices).
To define the local state $\sigma_i$ for the VN $v_i$, we map $\bm{b}_i$ into an integer using its binary representation as
\[
\sigma_i\triangleq \sum_{k=1}^{A_{\max}} b_{i,k}\,2^{k-1}\in\{0,\dots,2^{A_{\max}}-1\},\qquad
S_{\max}\triangleq 2^{A_{\max}}.
\]

We use a state-per-action table to obtain the Q-table $Q\in\mathbb{R}^{S_{\max}\times n}$, where the element $Q(\sigma,i)$ is the value of scheduling VN $v_i$ when its local state is $\sigma$.
At each scheduling step $t$, for the action $a$ we define
\[
w_{\text{before}}\triangleq \|\bm{\delta}_{\text{before}}\|_1,\qquad
w_{\text{after}}\triangleq \|\bm{\delta}_{\text{after}}\|_1,
\]
where $\bm{\delta}_{\text{before}}$ and $\bm{\delta}_{\text{after}}$ denote the residual mismatch vectors immediately before and after applying one SVNS update at VN~$v_i$, respectively. We then define the reward as
\[
r_t\triangleq \frac{w_{\text{before}}-w_{\text{after}}}{A_a},\qquad A_a=|\mathcal{N}(v_a)|.
\]
We also add a terminal bonus of $+1$ to the reward when $w_{\text{after}}=0$.
After action $a$, the one-step lookahead is
\[
\text{best\_future}\triangleq  \max_{i'\in\text{remaining}} Q(\sigma_{i'},i').
\]

Algorithm~\ref{alg:RL_SVNS} summarizes the proposed Q-learning-based training procedure for learning the VN scheduling policy used in our decoder.
In line~1, we initialize all Q-table entries to zero.
We train for $E_{\max}$ episodes; in our numerical results we use $E_{\max}=10^5$.

In each episode (lines~3-6), we first sample a training crossover probability $p_x$ from $\mathcal{P}_{\text{train}}$,
generate an $X$-error pattern and its syndrome, and initialize the BP beliefs/messages.
We then form an initial hard decision and the corresponding residual mismatch.
These steps set the initial decoding state for the learning procedure.

Next, in lines~7-29, we run the learning SVNS procedure for at most $I_{\max}$ BP iterations
(we use $I_{\max}=100$ in our numerical results).
At the beginning of each BP iteration (line~8), we define \emph{remaining} as the set of VNs that have not been
visited in the current SVNS iteration, and we schedule nodes \emph{without replacement} by removing the selected node from
\emph{remaining} (line~19).
Lines~9-28 iterate over the current \emph{remaining} set.
Line~10 checks if the residual mismatch is zero and terminates early if the decoder has converged.

At each step, we compute the state $\sigma_i$ for each candidate node in \emph{remaining} (line~13) and select an action using
the $\varepsilon$-greedy rule (line~14).\footnote{With probability $\varepsilon$ we explore by choosing a VN uniformly at random from the current \emph{remaining} set,
and with probability $1-\varepsilon$ we exploit the learned scores by choosing
\[
a \in \arg\max_{i\in\text{remaining}} Q(\sigma_i,i),
\]
breaking ties uniformly at random.}
We then apply one SVNS update at the chosen node and compute the corresponding reward based on the change in mismatch
(lines~15-18).
Lines~20-25 compute the one-step lookahead term by checking whether \emph{remaining} is empty. If not, we evaluate the best
future value over the remaining actions.
Finally, lines~26-27 update the Q-table entry corresponding to the chosen action.

After training, we load policy  $Q$ and run the same SVNS decoder but with $\varepsilon=0$ and no Q updates to have action
\[
a=\arg\max_{i\in\text{remaining}} Q(\sigma_i,i).
\]

In the next section, we develop local and incremental updates that significantly reduce the per-step complexity of our learned-based SVNS BP decoder while preserving its decoding performance.

\begin{algorithm}[t]
\caption{Q-learning for SVNS VN scheduling }\label{alg:RL_SVNS}
\begin{algorithmic}[1]
\State Initialize $Q\leftarrow 0$
\For{episode $=1$ to $E_{\max}$}
  \State Sample $p_x$ uniformly from $\mathcal P_{\text{train}}$
  \State Sample $\bm{e}\sim\text{Bernoulli}(p_x)$; compute $\bm{s}=H_1\bm{e}$
  \State Set $\mu_x=\log\big(\tfrac{1-p_x}{p_x}\big)$ and initialize beliefs/messages
  \State Initialize $\hat{\bm{e}}$ and residual mismatch $\bm{\delta}=\bm{s}\oplus(H_1\hat{\bm{e}})$
  \For{BP-iteration $=1$ to $I_{\max}$}
    \State remaining $\leftarrow \{i| A_i>0\}$
    \For{step $=1$ to $|\text{remaining}|$}
      \If{$\bm{\delta}=\bm{0}$} \State \textbf{break} \EndIf
      \State Compute $\sigma_i$ for all $i\in$ remaining
      \State Choose $a\in$ remaining by $\varepsilon$-greedy on $Q(\sigma_i,i)$
      \State $w_{\text{before}}\leftarrow \|\bm{\delta}\|_1$
      \State Apply one SVNS update at $a$
      \State $w_{\text{after}}\leftarrow \|\bm{\delta}\|_1$
      \State $r_t\leftarrow (w_{\text{before}}-w_{\text{after}})/A_a$
      \State Remove $a$ from remaining
      \If{remaining is empty}
        \State best\_future $\leftarrow 0$
      \Else
        \State Recompute $\sigma_{i'}$ for remaining
        \State best\_future $\leftarrow \max_{i'\in\text{remaining}} Q(\sigma_{i'},i')$
      \EndIf
      \State $T \leftarrow r_t + \gamma\cdot \text{best\_future}$
      \State $Q(\sigma_a,a)\leftarrow Q(\sigma_a,a) + \alpha\big(T - Q(\sigma_a,a)\big)$
    \EndFor
  \EndFor
\EndFor
\end{algorithmic}
\end{algorithm}

\section{Fast Inference and Fast Monte-Carlo Evaluation}\label{sec:fast}

In this section, we define several implementation details to speed up the inference of RL-SVNS.\footnote{The methods described in this section present new and extend conventional implementation techniques for LDPC decoders to our setting and do not change the output of the RL-SVNS decoder, rather they address the implementation challenges of the RL trained schedule. As such the paper can be read without this section.} The inference-time RL-SVNS decoder is defined by the following features:
\begin{enumerate}
\item the same SVNS BP message-update equations (Subsection~\ref{subsec:SVNS}), and
\item the greedy RL policy 
\[
a=\arg\max_{i\in\text{remaining}}Q(\sigma_i,i).
\]
\end{enumerate}
In a baseline greedy RL-SVNS implementation, we identify the following computational bottlenecks:
\begin{itemize}
\item Repeated VN selection from the remaining set;
\item Repeated recomputation of local states $\sigma_i$ from neighboring residual bits $\delta_j$;
\item Edge-wise $\tanh$-product computations in message updates;
\item Frequent syndrome computations in Monte Carlo simulation (\ie, forming $\bm{s}=H_1\bm{e}$ and checking $H_1(\bm{e}\oplus\hat{\bm{e}})=0$).
\end{itemize}
We now describe how to reduce these bottlenecks by leveraging standard sparse-graph representations and incrementally updating key decoding quantities.

\subsection{Edge Indexing and Adjacency Arrays}\label{subsec:csr}

We represent the Tanner graph of the binary parity-check matrix
$H_1\in\{0,1\}^{m_1\times n}$
by an \emph{edge-indexed} description of its nonzero entries.
Let
\[
\mathcal{K}\triangleq\{(j,i)\in[m_1
]\times[n]\;|\; [H_1]_{j,i}=1\},\quad K\triangleq|\mathcal{K}|,
\]
where we use the notation $[k]\triangleq\{1,2,\ldots,k\}, k \in \mathbb{Z}^+,$ and $[A]_{j,i}$ to refer to the entry of matrix $A$ in row $j$ and column $i$. Fix an arbitrary but deterministic bijection
\begin{equation}\label{pi}
\pi:\{1,\ldots,K\}\to\mathcal{K},\quad
\pi(k)=(j_k,i_k), ~~ k=1,\ldots,K,
\end{equation}
so that each edge index $k$ corresponds to exactly one nonzero entry of $H_1$ (connecting CN $j_k$ to VN $i_k$).

\medskip
During decoding we repeatedly need to iterate over
(i) all edges incident to a given VN $i$, and
(ii) all edges incident to a given CN $j$.
If we store only the unordered list $\{(j_k,i_k)\}_{k=1}^K$, then identifying all incident edges of a node
can require scanning all $K$ edges.
Instead, we preprocess once and build \emph{adjacency arrays}:
two edge-index arrays, together with pointer (offset) arrays that locate each node's contiguous block of incident edges.

\medskip
\noindent\textbf{CN adjacency.}
For each CN $j\in[m_1]$, define its incident-edge set
\[
\mathcal{K}_j \;\triangleq\;\{k\in[K]\;|\; j_k=j\},\qquad
d^{(c)}_j \triangleq |\mathcal{K}_j|.
\]
We construct an array $\mathbf{k}^{(c)}\in[K]^K$ that stores (integer representations of) all edges incident to each CN contiguously,
and a pointer array $\mathbf{p}^{(c)}\in\{1,\ldots,K\}^{m}$ (defined below) such that
\begin{equation}\label{eq:csr_cn_math}
\mathcal{K}_j\ \leftrightarrow\
\mathbf{k}^{(c)}\!\big[\mathbf{p}^{(c)}_j:\mathbf{p}^{(c)}_j+d_j^{(c)}-1\big],
\end{equation}
where $[a:b]$ denotes an integer index range and $\mathbf{k}[a:b]$ denotes the sub-vector of $\mathbf{k}$ corresponding to that index range.
Thus, iterating over all neighbors of check $j$ becomes a tight loop over a contiguous slice of $\mathbf{k}^{(c)}$.

\medskip
\noindent\textbf{VN adjacency.}
Similarly, for each VN $i\in[n]$ define
\[
\mathcal{K}_i \;\triangleq\;\{k\in[K]\;|\; i_k=i\},\qquad
d^{(v)}_i \triangleq |\mathcal{K}_i|.
\]
We construct an array $\mathbf{k}^{(v)}\in[K]^K$ and pointers $\mathbf{p}^{(v)}\in\{1,\ldots,K\}^{n}$ such that
\begin{equation}\label{eq:csc_vn_math}
\mathcal{K}_i\ \leftrightarrow\
\mathbf{k}^{(v)}\!\big[\mathbf{p}^{(v)}_i:\mathbf{p}^{(v)}_i+d_i^{(v)}-1\big].
\end{equation}

\medskip
\noindent\textbf{Pointer construction.}
Let $d^{(c)}_j$ and $d^{(v)}_i$ be the CN and VN degrees defined above.
We build the pointers by prefix sums
\[
\mathbf{p}^{(c)}_1=1,\qquad
\mathbf{p}^{(c)}_{j+1}=\mathbf{p}^{(c)}_j+d^{(c)}_j,\quad j=1,\ldots,m-1,
\]
and analogously
\[
\mathbf{p}^{(v)}_1=1,\qquad
\mathbf{p}^{(v)}_{i+1}=\mathbf{p}^{(v)}_i+d^{(v)}_i,\quad i=1,\ldots,n-1.
\]
The arrays $\mathbf{k}^{(c)}$ and $\mathbf{k}^{(v)}$ are then filled by placing each edge index $k$ into the
next available slot of its corresponding block as described above.
After this preprocessing, the neighborhood iteration costs $O(d^{(v)}_i)$ (or $O(d^{(c)}_j)$) and accesses only contiguous memory.

Also, for each VN $i$, we fix a deterministic ordering of its incident edges
(nominally the increasing order of the check indices $j_k$).
Let $A_i \triangleq d^{(v)}_i$ and write the ordered incident edges as
\[
k_{i,0}, k_{i,1},\ldots,k_{i,A_i-1}\in\mathcal{K}_i .
\]
We assign a power-of-two weight to each edge according to its local position as
\begin{equation}\label{eq:edge_weight_def_math}
\beta(k_{i,t}) \triangleq 2^{t},\qquad t=0,1,\ldots,A_i-1 .
\end{equation}
These weights are \emph{local to VN $i$}: they map the $t$-th neighboring check of $i$ as the $t$-th bit of an
integer bitmask. This will be used in Section~\ref{subsec:incremental_state} to maintain local states via $O(1)$ XOR flips.

\begin{example}
Suppose a VN $i$ has degree $A_i=3$ and its neighboring CNs (in the chosen deterministic order) are
\[
\mathcal{N}(v_i)=\{j_1,j_2,j_3\},
\]
with ordered incident edges $k_{i,0}=(j_1,i)$, $k_{i,1}=(j_2,i)$, $k_{i,2}=(j_3,i)$.
Then \eqref{eq:edge_weight_def_math} assigns
\[
\beta(k_{i,0})=1,\qquad \beta(k_{i,1})=2,\qquad \beta(k_{i,2})=4.
\]
Later, if check $j_2$ flips its residual bit, then the state bit of VN $i$ corresponding to edge $k_{i,1}$
flips by XOR with $\beta(k_{i,1})=2$ (see \eqref{eq:xor_flip_math} in Section~\ref{subsec:incremental_state}).
\end{example}

After preprocessing, neighborhood traversals are simple range iterations over the slices as
\[
\mathcal{K}_i\ \leftrightarrow\ \mathbf{k}^{(v)}\!\big[\mathbf{p}^{(v)}_i:\mathbf{p}^{(v)}_{i+1}-1\big],\qquad
\mathcal{K}_j\ \leftrightarrow\ \mathbf{k}^{(c)}\!\big[\mathbf{p}^{(c)}_j:\mathbf{p}^{(c)}_{j+1}-1\big].
\]
This adjacency representation is the backbone of our fast inference implementation described later in Section~\ref{subsec:fast_inference}.

\subsection{Incremental Maintenance of Local States $\sigma_i$}\label{subsec:incremental_state}

Recall the residual mismatch bits
\[
\delta_j \in \{0,1\},\qquad j\in[m_1],
\]
defined from the current residual syndrome vector $\bm{\delta}$ (where  $\delta_j=1$ iff check $j$ is unsatisfied).
For each VN $i\in[n]$, we store its local RL state as an integer bitmask. 
Let $\beta:\mathcal{K}_i\to\{1,2,4,\dots\}$ be the per-edge power-of-two weight assignment as described in 
\eqref{eq:edge_weight_def_math} (local to each VN via the fixed order $k_{i,0},\ldots,k_{i,A_i-1}$).
We then define
\begin{equation}\label{eq:si_bitmask_math}
\sigma_i \;\triangleq\;\sum_{k\in \mathcal{K}_i} \delta_{j_k}\,\beta(k)\ \in\ \{0,1,\ldots,2^{A_i}-1\},
\end{equation}
so that the $t$-th bit of $\sigma_i$ (in its bit-wise binary notation) equals $1$ iff the $t$-th neighboring check of VN $i$ is currently unsatisfied.

Consider any CN $j\in[m_1]$. If its residual bit flips,
\[
\delta_j \leftarrow 1+\delta_j ~~ \text{in}~ \mathbb{F}_2,
\]
then for every incident edge $k\in\mathcal{K}$ with $j_k=j$ (so $\pi(k)=(j,i_k)$),
exactly one bit of the neighboring VN state must flip, namely the bit corresponding to edge $k$.
This update is implemented by a single bitwise XOR according to
\begin{equation}\label{eq:xor_flip_math}
\sigma_{i_k}\ \leftarrow\ \sigma_{i_k}\ \oplus_{2}\ \beta(k),
\end{equation}
where $\oplus_{2}$ denotes bitwise XOR on the integer masks.
Indeed, in \eqref{eq:si_bitmask_math} the term $\delta_{j_k}\beta(k)$ flips between $0$ and $\beta(k)$, which is
precisely the effect of XOR with $\beta(k)$ on the corresponding bit.

Now suppose an SVNS update at some VN $i$ flips the current hard decision $\hat e_i$.
By Lemma~\ref{lem:delta_flip}, only checks adjacent to $i$ can change their residual bits, \ie, $\delta_j$  may flip only for $j\in \mathcal{N}(v_i)$ and unchanged for $j\notin \mathcal{N}(v_i),$
where the check-neighborhood of VN $i$ is
\[
\mathcal{N}(v_i)\triangleq \{j\in[m_1]\;|\; \exists\, k\in\mathcal{K}\ \text{s.t.}\ (j_k,i_k)=(j,i)\}.
\]
Consequently, only VNs adjacent to those flipped checks can have their states changed. Define the
affected VN set (second-order neighborhood)
\[
\mathcal{V}_{\mathrm{aff}}(i)\ \triangleq\ \bigcup_{j\in\mathcal{N}(v_i)} \mathcal{N}(c_j),
\]
where
\[
\mathcal{N}(c_j)\triangleq \{u\in[n]\;|\; \exists\, k\in\mathcal{K}\ \text{s.t.}\ (j_k,i_k)=(j,u)\}.
\]
For sparse QLDPC graphs, $|\mathcal{V}_{\mathrm{aff}}(i)|\ll n$, so we update only $\{\sigma_u\;|\; u\in\mathcal{V}_{\mathrm{aff}}(i)\}$
using the XOR rule \eqref{eq:xor_flip_math}, rather than recomputing all states from scratch.

\begin{example}
Assume $A_{\max}=3$ and consider the neighborhood
\[
\begin{aligned}
\mathcal{N}(v_i)   &= \{j_4,j_7\},\\
\mathcal{N}(c_{j_4}) &= \{i,u_1,u_2\},\qquad
\mathcal{N}(c_{j_7}) = \{i,u_2,u_3\}.
\end{aligned}
\]
Fix the deterministic neighbor order at each VN and the corresponding weights $\beta(\cdot)$.
For this example, assume the local orders (hence bit positions) are
\[
\begin{aligned}
&\text{VN } i: &&(j_4,j_7)\ \Rightarrow\ \beta(k_{j_4,i})=1,\ \beta(k_{j_7,i})=2,\\
&\text{VN } u_1: &&(j_4,j_9)\ \Rightarrow\ \beta(k_{j_4,u_1})=1,\ \beta(k_{j_9,u_1})=2,\\
&\text{VN } u_2: &&(j_4,j_7,j_{10})\ \Rightarrow\ \beta(k_{j_4,u_2})=1,\ \beta(k_{j_7,u_2})=2,\\
&            &&\hspace{3.2em}\beta(k_{j_{10},u_2})=4,\\
&\text{VN } u_3: &&(j_7,j_{11})\ \Rightarrow\ \beta(k_{j_7,u_3})=1,\ \beta(k_{j_{11},u_3})=2,
\end{aligned}
\]
where we refer to edge $k_{j_a,i_b}$ meaning $\pi(k_{j_a,i_b})=(j_a,i_b)$ from \eqref{pi}. Assume the current residual bits are
\[
\delta_{j_4}=1,~~ \delta_{j_7}=0,~~ \delta_{j_9}=1,~~ \delta_{j_{10}}=0,~~ \delta_{j_{11}}=1.
\]
Then the VN states from \eqref{eq:si_bitmask_math} are
\[
\begin{aligned}
\sigma_i   &= 1\cdot \delta_{j_4} + 2\cdot \delta_{j_7} = 1,\\
\sigma_{u_1} &= 1\cdot \delta_{j_4} + 2\cdot \delta_{j_9} = 3,\\
\sigma_{u_2} &= 1\cdot \delta_{j_4} + 2\cdot \delta_{j_7} + 4\cdot \delta_{j_{10}} = 1,\\
\sigma_{u_3} &= 1\cdot \delta_{j_7} + 2\cdot \delta_{j_{11}} = 2.
\end{aligned}
\]

\smallskip
Now suppose the decoder updates VN $i$ and the hard decision flips. Then only the adjacent checks flip (parity change) according to
\[
\delta_{j_4}:1\to 0,\qquad \delta_{j_7}:0\to 1,
\]
while all other $\delta_j$ remain unchanged. Instead of recomputing all states, we apply the XOR updates
\eqref{eq:xor_flip_math} for every edge incident to the flipped checks.\footnote{We note that the sum-product algorithm may not ``flip'' a check with soft messages; however, we remind the reader that to determine the state of a VN we only use hard decisions.} {Updates due to $j_4$ flipping are given by}
\[
\sigma_i \leftarrow \sigma_i \oplus_{2} 1,\qquad
\sigma_{u_1} \leftarrow \sigma_{u_1} \oplus_{2} 1,\qquad
\sigma_{u_2} \leftarrow \sigma_{u_2} \oplus_{2} 1,
\]
{and updates due to $j_7$ flipping are given by}
\[
\sigma_i \leftarrow \sigma_i \oplus_{2} 2,\qquad
\sigma_{u_2} \leftarrow \sigma_{u_2} \oplus_{2} 2,\qquad
\sigma_{u_3} \leftarrow \sigma_{u_3} \oplus_{2} 1,
\]
where the last XOR uses $\beta(k_{j_7,u_3})=1$ because $j_7$ is the first neighbor of $u_3$ in its fixed order. Carrying out the XORs gives
\[
\begin{aligned}
\sigma_i:   &\ 1 \oplus_{2} 1 \oplus_{2} 2 = 2,\\
\sigma_{u_1}:&\ 3 \oplus_{2} 1 = 2,\\
\sigma_{u_2}:&\ 1 \oplus_{2} 1 \oplus_{2} 2 = 2,\\
\sigma_{u_3}:&\ 2 \oplus_{2} 1 = 3,
\end{aligned}
\]
which matches direct recomputation using the new residual bits $(\delta_{j_4},\delta_{j_7})=(0,1)$.

We note that only VNs in $\mathcal{N}(c_{j_4})\cup\mathcal{N}(c_{j_7})=\{i,u_1,u_2,u_3\}$ were modified.
All other VNs keep the same state because they are not adjacent to the flipped checks.
This is exactly the second-order neighborhood effect: a flip at VN $i$ changes checks in $\mathcal{N}(v_i)$ (one order),
which can only affect VNs in $\bigcup_{j\in\mathcal{N}(v_i)}\mathcal{N}(c_j)$ (second order) and the VN $i$ itself.
\end{example}

\subsection{SVNS Check-products (Cached $\tanh$-products)}\label{subsec:cached_products}

In the SVNS update, each CN-to-VN message on an edge $k=(j,i)$ involves a product over
\emph{all other} VN-to-CN terms incident to the same CN $j$.
If this product is recomputed from scratch for every incident edge, then the work at check $j$
scales as $O\!\big((d^{(c)}_j)^2\big)$ per sweep, where $d^{(c)}_j$ is the degree of check $j$.
This redundancy can be eliminated by caching one product per check.

Using the edge indexing described in Section~\ref{subsec:csr}, let $\pi(k)=(j_k,i_k)$ for $k\in[K]$,
and with the incident-edge sets $\mathcal{K}_j$ and $\mathcal{K}_i$.
We store one VN-to-CN LLR message per edge as
\[
m^{(v\to c)}_k \in \mathbb{R},\qquad k\in[K].
\]
As usual, the LLR-domain CN update is expressed via the $\tanh(\cdot/2)$ transform. Define
\begin{equation}\label{eq:x_edge_def_math}
x_k \triangleq \tanh\!\Big(\tfrac{m^{(v\to c)}_k}{2}\Big),\qquad k\in[K].
\end{equation}
For each CN $j\in[m_1]$, we cache the product of all incident $x$-values according to
\begin{equation}\label{eq:Pc_def_math}
P_j \triangleq \prod_{k\in\mathcal{K}_j} x_k .
\end{equation}
Then, for an edge $k\in\mathcal{K}_j$, the excluded product
\begin{equation}\label{eq:prod_excl_def_math}
\prod_{k'\in\mathcal{K}_j\setminus\{k\}} x_{k'}
\end{equation}
is ideally obtained by the ratio
\begin{equation}\label{eq:prod_excl_div_math}
\operatorname{prod\_excl}(j,k)\triangleq \frac{P_j}{x_k}.
\end{equation}

If $|x_k|$ is extremely small, the division in \eqref{eq:prod_excl_div_math} can be unstable.
In our numerical results we fix a small threshold $\varepsilon_{th}>0$ and decide:
\begin{itemize}
\item if $|x_k|>\varepsilon_{th}$, use \eqref{eq:prod_excl_div_math};
\item otherwise, recompute \eqref{eq:prod_excl_def_math} directly by a single loop over
$\mathcal{K}_j\setminus\{k\}$.
\end{itemize}
This yields the \emph{same mathematical CN update} as the baseline implementation,
while avoiding redundant products in typical regimes where $|x_k|$ rarely falls below $\varepsilon_{th}$.

We now define some modification for efficiency to to steps 2-4 from the conventional BP described in Section~\ref{subsec:SVNS}.Let $s_j\in\{0,1\}$ denote the measured syndrome bit at check $j$.
The CN-to-VN LLR message along edge $k\in\mathcal{K}_j$ (from CN $j$ to VN $i_k$) is
\begin{equation}\label{eq:c2v_cached_math}
m^{(c\to v)}_{j\to i_k}
=
2\,\operatorname{atanh}\!\Big(
(-1)^{s_j}\,\operatorname{prod\_excl}(j,k)
\Big),
\end{equation}
followed by clipping to $[-L_{\max},L_{\max}]$ to prevent overflow and to keep $|\tanh(\cdot/2)|$ away from $1$. When SVNS selects a VN $i$, we form its updated a-posteriori LLR as
\begin{equation}\label{eq:Lnew_cached_math}
L_i^{\mathrm{new}}
=
L_i \;+\; \sum_{k\in\mathcal{K}_i} m^{(c\to v)}_{j_k\to i},
\end{equation}
where $L_i$ is the channel LLR.
The hard decision is updated as $\hat e_i \leftarrow \mathds{1}[L_i^{\mathrm{new}}<0]$. We then update the outgoing variable-to-check messages on incident edges using the standard extrinsic form
\begin{equation}\label{eq:v2c_update_cached_math}
m^{(v\to c)}_k
\leftarrow
\mathrm{clip}\!\Big(
L_i^{\mathrm{new}} - m^{(c\to v)}_{j_k\to i},\;[-L_{\max},L_{\max}]
\Big),\qquad k\in\mathcal{K}_i,
\end{equation}
and refresh $x_k$ via \eqref{eq:x_edge_def_math}.

Throughout, we use an incremental maintenance of the CN cache to update $P_j$, \ie, only the cache of check $j$ must be updated whenever $x_k$ changes on an edge $k$ with $\pi(k)=(j,i)$.
Let $x_k^{\mathrm{old}}$ and $x_k^{\mathrm{new}}$ denote the values before and after \eqref{eq:v2c_update_cached_math}.
If $|x_k^{\mathrm{old}}|>\varepsilon_{th}$, we update in $O(1)$ complexity using
\begin{equation}\label{eq:Pc_update_ratio_math}
P_j \leftarrow P_j \cdot \frac{x_k^{\mathrm{new}}}{x_k^{\mathrm{old}}}.
\end{equation}
If $|x_k^{\mathrm{old}}|\le\varepsilon_{th}$, we recompute $P_j$ directly by one loop over $\mathcal{E}_j$
to maintain robustness. A single SVNS update at VN $i$ modifies only:
(i) the edge messages $\{m^{(v\to c)}_k\;|\; k\in\mathcal{K}_i\}$ (hence $\{x_k\;|\; k\in\mathcal{K}_i\}$), and
(ii) the cached products $\{P_j\;|\; j\in\mathcal{N}(v_i)\}$.
All other edges and check caches remain unchanged.
Thus, caching preserves the exact BP update rule while removing redundant per-edge product computations at CNs.

\subsection{Fast Update After a Hard-decision Flip (Residual, States, and Priorities)}\label{subsec:flip_update}

Within one SVNS iteration, the greedy RL scheduler selects the next VN among those not yet processed
(schedule-without-replacement) by comparing the scores $Q(s_u,u)$.
Hence, when an SVNS update at a VN $i$ flips its hard decision $\hat e_i$,
we must update only the quantities that can change the scores of the remaining nodes:
the residual bits $\{\delta_j\}$ (equivalently the residual syndrome), the mismatch weight $w$, 
the local states $\{\sigma_u\}$, and the priorities of the remaining nodes used by the greedy selection mechanism.

What follows is a local implementation of the steps defined in the previopus subsections. If $\hat e_i$ flips, Lemma~\ref{lem:delta_flip} implies that only CNs adjacent to $i$ can flip residual status: $\delta_j$ may flip only for $j\in\mathcal{N}(v_i)$ and unchanged for $j\notin\mathcal{N}(v_i)$. 
Consequently, only VNs adjacent to those flipped checks can see their local states change. 
For sparse Tanner graphs,
$|\mathcal{V}_{\mathrm{aff}}(v)|\ll n$. 
Each flipped residual bit changes $w$ by exactly $\pm 1$ as
\[
\delta_j \leftarrow 1\oplus \delta_j
\quad\Longrightarrow\quad
w \leftarrow w + (2\delta_j-1),
\]
where $\delta_j$ on the right-hand side denotes the \emph{new} value after flipping.
Thus, updating $w$ after a flip at $i$ costs $O(|\mathcal{N}(i)|)$. Local states $\sigma_u$ are maintained by the XOR rule of Section~\ref{subsec:incremental_state}:
whenever a residual bit $\delta_j$ flips, each incident edge $k\in\mathcal{K}_j$ triggers an update of $\sigma_{i_k}$ following \eqref{eq:xor_flip_math}. 
Therefore, when $\hat e_i$ flips, it suffices to apply these XOR flips for all edges incident
to each check in $\mathcal{N}(v_i)$, which touches exactly the nodes in $\mathcal{V}_{\mathrm{aff}}(i)$.

Let $\mathcal{R}$ denote the set of VNs that are still \emph{remaining} in the current iteration.
Only nodes $u\in\mathcal{R}\cap \mathcal{V}_{\mathrm{aff}}(v)$ can have changed state, hence only these nodes can have changed scores.
Accordingly, we recompute and update the priority key of the heap (explained further in \ref{subsec:heap}) only for this local subset as
\begin{equation}\label{eq:key_update_local}
\mathrm{key}(u) \leftarrow Q(\sigma_u,u),\qquad u\in\mathcal{R}\cap \mathcal{V}_{\mathrm{aff}}(i),
\end{equation}
followed by a local data-structure repair.

Since a flip at $i$ has strictly local consequences: it can only change residual bits on $\mathcal{N}(v_i)$ and only change
states/scores of VNs in $\mathcal{V}_{\mathrm{aff}}(v_i)$.
Hence, greedy scheduling requires priority updates only inside this second-order neighborhood.

\subsection{Greedy Selection with a Max-heap (Priority Queue)}\label{subsec:heap}

At each scheduling step we must select an index achieving
\begin{equation}\label{eq:greedy_argmax}
u^\star \in \arg\max_{u\in\mathcal{R}} Q(s_u,u),
\end{equation}
where $\mathcal{R}$ is the current remaining set.
A naive implementation evaluates \eqref{eq:greedy_argmax} by scanning all $u\in\mathcal{R}$, costing $O(|\mathcal{R}|)$ time per step.
We maintain a max-heap over the indices in $\mathcal{R}$ with key function
\begin{equation}\label{eq:heap_key_def}
\mathrm{key}(u)\triangleq Q(s_u,u).
\end{equation}
A max-heap supports:
(i) \emph{extract-max} (remove and return an index with maximal key) in $O(\log|\mathcal{R}|)$ time, and
(ii) \emph{change-key} (update the key of a specific item and restore heap order) in $O(\log|\mathcal{R}|)$ time.

At the beginning of each SVNS iteration we build a heap over the active variables.
Each scheduling step applies extract-max once; the extracted node is removed from $\mathcal{R}$ and is not reinserted
within the same iteration.
When a flip-update changes some states $\sigma_u$, we update keys only for the affected remaining nodes
$u\in\mathcal{R}\cap\mathcal{V}_{\mathrm{aff}}(i)$ as in \eqref{eq:key_update_local}, and then repair heap order locally
via a change-key operation.
Heap-based greedy selection results for each argmax in \eqref{eq:greedy_argmax} to cost $O(\log|\mathcal{R}|)$ time,
and state-induced key updates cost $O(\log|\mathcal{R}|)$ time per affected remaining node only, matching the second-order locality
of Section~\ref{subsec:flip_update}.

\subsection{Main Inference Loop}\label{subsec:fast_inference}

We now summarize how the implementation combines:
(i) edge indexing and adjacency arrays (Section~\ref{subsec:csr}),
(ii) cached CN products for the SVNS CN update (Section~\ref{subsec:cached_products}),
(iii) incremental maintenance of local states (Section~\ref{subsec:incremental_state}),
(iv) flip-triggered local updates of residuals, states, and priorities (Section~\ref{subsec:flip_update}),
and (v) greedy selection via a max-heap (Section~\ref{subsec:heap}).

\paragraph{Initialization}
Given channel LLRs $\{L_i\}_{i=1}^n$ and measured syndrome bits $\{s_j\}_{j=1}^{m_1}$:
\begin{itemize}
\item Initialize edge messages $\{m^{(i\to j)}_k\}_{k\in[K]}$ from the channel LLRs, and form $\{x_k\}$ via \eqref{eq:x_edge_def_math};.
\item Build cached products $\{P_j\}$ via \eqref{eq:Pc_def_math};
\item Initialize hard decisions $\hat e_i=\mathds{1}[L_i<0]$ and residual bits $\delta_j$ by comparing the measured
syndrome $s_j$ to the predicted syndrome induced by $\hat e$ (XOR accumulation over $\mathcal{N}(j)$).
Set $w=\sum_j \delta_j$;
\item Initialize local states $\{\sigma_i\}$ from \eqref{eq:si_bitmask_math}.
\end{itemize}

\paragraph{One BP iteration}
Fix an iteration cap $I_{\max}$. For each iteration:
\begin{enumerate}
\item If $w=0$, declare success;
\item Initialize the remaining set $\mathcal{R}$ to the active VNs and build a max-heap with keys $\mathrm{key}(u)=Q(s_u,u)$;
\item While $\mathcal{R}\neq\emptyset$:
\begin{itemize}
\item Extract $u^\star$ with largest key (Section~\ref{subsec:heap}) and remove it from $\mathcal{R}$;
\item Perform one SVNS update at $u^\star$ using cached products (Section~\ref{subsec:cached_products}),
updating only $\{m^{(i\to j)}_k\;|\; k\in\mathcal{K}_{u^\star}\}$ and $\{P_j\;|\; j\in\mathcal{N}(u^\star)\}$;
\item If the hard decision at $u^\star$ flips, apply the local residual/state/priority updates of Section~\ref{subsec:flip_update}.
\end{itemize}
\end{enumerate}
If the cap is reached with $w>0$, declare failure.

\section{RL Scheduling for Quaternary SVNS Decoding}
\label{sec:rl_svns_depolarizing}

We now extend the RL scheduling framework from an $X$-only channel to the depolarizing channel, while keeping the underlying message-update rule fixed (SVNS) and learning only \emph{which} VN to update next.
In particular, we keep the two-stream SVNS CN/VN updates, but we (i) compute VN-to-CN messages via a quaternary-coupled extrinsic belief and (ii) make the hard decision depend on the current (freshly updated) LLR pair at the visited node.
{Our message updates follow the scalar-message reparameterization of conventional quaternary BP over $\{I,X,Y,Z\}$ \cite{poulin2008iterative}, as derived in \cite{kuo2020refined, lai2021log}.}

\subsection{CSS Framework and Depolarizing Prior}
\label{subsec:css_depolarizing_prior}

Consider a CSS code specified by binary matrices $(H_X,H_Z)$ and a Pauli error pattern
$\mathbf{Q}=(Q_1,\dots,Q_n)\in\{I,X,Y,Z\}^n$.
We decompose $\mathbf{Q}$ into two binary error vectors
$\mathbf{e}^X,\mathbf{e}^Z\in\mathbb{F}_2^n$ defined componentwise by
\[
e^X_i \triangleq \mathbf{1}[Q_i\in\{Z,Y\}],
\qquad
e^Z_i \triangleq \mathbf{1}[Q_i\in\{X,Y\}],
\]
so that the measured syndromes decompose as
\[
\mathbf{s}^X = H_X \mathbf{e}^X,\qquad
\mathbf{s}^Z = H_Z \mathbf{e}^Z,
\]
where decoding aims to find a Pauli estimate consistent with $(\mathbf{s}^X,\mathbf{s}^Z)$.

Under depolarizing noise with physical error rate $p$, each qubit satisfies
\[
\begin{aligned}
&\Pr(Q_i=I)=1-p,\\
&\Pr(Q_i=X)=\Pr(Q_i=Y)=\Pr(Q_i=Z)=\frac{p}{3}.
\end{aligned}
\]
We define the Tanner-graph neighborhoods
\[
\begin{aligned}
\mathcal{N}_X(v_i) &\triangleq \{j\;|\; H_X(j,i)=1\},\\
\mathcal{N}_Z(v_i) &\triangleq \{j\;|\; H_Z(j,i)=1\}.
\end{aligned}
\]
and similarly $\mathcal{N}_X(c_j)$ (resp. $\mathcal{N}_Z(c_j)$) for the variable neighbors of check $c_j$ in
the $H_X$ (resp. $H_Z$) graph.
Let $A_i^X \triangleq |\mathcal{N}_X(v_i)|$ and $A_i^Z \triangleq |\mathcal{N}_Z(v_i)|$.

\subsection{SVNS Updates}
\label{subsec:two_stream_svns}

Rather than passing full 4-ary edge messages, we maintain two scalar LLR streams:
one on the $H_X$ graph and one on the $H_Z$ graph.
For each edge $(j,i)$ in the $H_X$ Tanner graph, store LLR messages
$m^{X}_{i\to j}$ and $m^{X}_{j\to i}$; for each edge $(j,i)$ in the $H_Z$ Tanner graph, store
$m^{Z}_{i\to j}$ and $m^{Z}_{j\to i}$. For all edges incident to VN $v_i$, we initialize the VN-to-CN LLRs as
\begin{align}
m^{X}_{i\to j} &\triangleq \log\frac{\Pr(Q_i\in\{I,X\})}{\Pr(Q_i\in\{Y,Z\})}
= \log\frac{1-\tfrac{2p}{3}}{\tfrac{2p}{3}}
\;\triangleq\; \mu_{\rm dep}, \label{eq:init_mx}\\
m^{Z}_{i\to j} &\triangleq \log\frac{\Pr(Q_i\in\{I,Z\})}{\Pr(Q_i\in\{X,Y\})}
= \log\frac{1-\tfrac{2p}{3}}{\tfrac{2p}{3}}
\;=\; \mu_{\rm dep}. \label{eq:init_mz}
\end{align}

A single SVNS step at VN $i$ consists of the following BP-style refreshes on \emph{both} Tanner graphs:

\smallskip
\noindent\textbf{Step 1)}
For each $j\in\mathcal{N}_X(v_i)$, update the $H_X$-side CN-to-VN message using 
\begin{equation}
m^{X}_{j\to i} \;=\; 2\,\tanh^{-1}\!\left(
(-1)^{s^X_j}\!\!\!\prod_{i'\in\mathcal{N}_X(c_j)\setminus\{i\}}
\tanh\!\left(\frac{m^{X}_{i'\to j}}{2}\right)\right).
\label{eq:cn_update_x}
\end{equation}
Similarly, for each $j\in\mathcal{N}_Z(v_i)$,
\begin{equation}
m^{Z}_{j\to i} \;=\; 2\,\tanh^{-1}\!\left(
(-1)^{s^Z_j}\!\!\!\prod_{i'\in\mathcal{N}_Z(c_j)\setminus\{i\}}
\tanh\!\left(\frac{m^{Z}_{i'\to j}}{2}\right)\right);
\label{eq:cn_update_z}
\end{equation}

\smallskip
\noindent\textbf{Step 2)}
Form the two a-posteriori LLRs
\begin{equation}
L_i^{X} \;=\; \mu_{\rm dep} + \sum_{j\in\mathcal{N}_X(v_i)} m^{X}_{j\to i},
~~
L_i^{Z} \;=\; \mu_{\rm dep} + \sum_{j\in\mathcal{N}_Z(v_i)} m^{Z}_{j\to i};
\label{eq:Li_streams}
\end{equation}

\smallskip
\noindent\textbf{Step 3)}
Update VN-to-CN messages using a \emph{quaternary-coupled} extrinsic belief. For each Pauli error $q\in\{I,X,Y,Z\}$, define the component bits
\begin{equation}
e^{X}(q)\triangleq \mathbf{1}[q\in\{Y,Z\}],
\qquad
e^{Z}(q)\triangleq \mathbf{1}[q\in\{X,Y\}].
\label{eq:component_bits}
\end{equation}
Under depolarizing noise, the induced \emph{marginal} component priors are
\begin{equation}
\pi^{X}_1=\Pr(e^{X}=1)=\Pr(Q\in\{Y,Z\})=\frac{2p}{3},
~~
\pi^{X}_0=1-\frac{2p}{3},
\label{eq:piX_marginal}
\end{equation}
and similarly $\pi^{Z}_1=\frac{2p}{3}$ and $\pi^{Z}_0=1-\frac{2p}{3}$. To avoid double counting the marginals already injected by $\mu_{\rm dep}$ while restoring the joint depolarizing prior (the $Y$-correlation), define the correction factor
\begin{equation}
\kappa_q(p)\triangleq
\frac{\pi_q}{\pi^{X}_{e^{X}(q)}\,\pi^{Z}_{e^{Z}(q)}},
~~
\pi_I=1-p,\;\;\pi_X=\pi_Y=\pi_Z=\frac{p}{3}.
\label{eq:kappa_def}
\end{equation}

Given any LLR pair $(L^{Z},L^{X})$, define the quaternary likelihood map
\begin{equation}
\Phi_i\!\big(q;L^{Z},L^{X}\big)\triangleq
\exp\!\Big(
\tfrac{1}{2}(1-2e^{Z}(q))\,L^{Z}
+\tfrac{1}{2}(1-2e^{X}(q))\,L^{X}
\Big).
\label{eq:Phi_def}
\end{equation}
Now form the extrinsic LLR pairs for each neighboring check $j$ by computing
\begin{equation}
L^{X\setminus j}_i \;\triangleq\; L_i^X - m^X_{j\to i},
\qquad
L^{Z\setminus j}_i \;\triangleq\; L_i^Z - m^Z_{j\to i}.
\label{eq:extrinsic_pairs}
\end{equation}
Using these pairs, form the unnormalized extrinsic quaternary scores
(for $q\in\{I,X,Y,Z\}$) according to
\begin{align}
\tilde B^{X}_{i\to j}(q) &\triangleq \kappa_q(p)\,
\Phi_i\!\big(q;\,L_i^{Z},\,L_i^{X\setminus j}\big),
\qquad j\in\mathcal{N}_X(v_i), \label{eq:Bex_X}\\
\tilde B^{Z}_{i\to j}(q) &\triangleq \kappa_q(p)\,
\Phi_i\!\big(q;\,L_i^{Z\setminus j},\,L_i^{X}\big),
\qquad j\in\mathcal{N}_Z(v_i). \label{eq:Bex_Z}
\end{align}

Finally, we update outgoing LLRs by grouping Pauli operators according to the component bit seen on each Tanner graph as
\begin{equation}
\begin{aligned}
m^{X}_{i\to j}
&\;=\; \log\frac{\tilde B^{X}_{i\to j}(I)+\tilde B^{X}_{i\to j}(X)}
{\tilde B^{X}_{i\to j}(Y)+\tilde B^{X}_{i\to j}(Z)},\\
m^{Z}_{i\to j}
&\;=\; \log\frac{\tilde B^{Z}_{i\to j}(I)+\tilde B^{Z}_{i\to j}(Z)}
{\tilde B^{Z}_{i\to j}(X)+\tilde B^{Z}_{i\to j}(Y)};
\end{aligned}
\label{eq:vn_extrinsic}
\end{equation}

\smallskip
\noindent\textbf{Step 4)}
Now, we combine the two streams into a quaternary belief and hard decision \emph{using the freshly updated pair} $(L_i^X,L_i^Z)$ at the visited VN. 
Using \eqref{eq:kappa_def}-\eqref{eq:Phi_def}, we define
\[
\tilde B_i(q)\triangleq \kappa_q(p)\,\Phi_i\!\big(q;L_i^{Z},L_i^{X}\big),
\qquad q\in\{I,X,Y,Z\}.
\]
Normalization gives
\[
B_i(q)=\frac{\tilde B_i(q)}{\sum_{q'\in\{I,X,Y,Z\}}\tilde B_i(q')}.
\]
At each visit of VN $i$, we update the hard decision using the \emph{current} belief (after Steps~1-3 at node $i$) according to
\[
\hat q_i=\arg\max_{q\in\{I,X,Y,Z\}} B_i(q).
\]
We then define
\[
\hat e^Z_i \triangleq \mathbf{1}[\hat q_i\in\{X,Y\}],
\qquad
\hat e^X_i \triangleq \mathbf{1}[\hat q_i\in\{Z,Y\}].
\]

\subsection{Residual Mismatches and Locality Under Pauli Errors}
\label{subsec:residual_locality_depolarizing}

Let $\hat{\mathbf{e}}^X,\hat{\mathbf{e}}^Z\in\mathbb{F}_2^n$ denote the current hard estimates induced by
$\hat{\mathbf{q}}=(\hat q_1,\dots,\hat q_n)$.
Let the residual mismatch vectors be
\begin{equation}
\boldsymbol{\delta}^{X} \;\triangleq\; \mathbf{s}^{X}\oplus (H_X \hat{\mathbf{e}}^X),
\qquad
\boldsymbol{\delta}^{Z} \;\triangleq\; \mathbf{s}^{Z}\oplus (H_Z \hat{\mathbf{e}}^Z),
\label{eq:deltaXZ}
\end{equation}
and the scalar mismatch weight
\begin{equation}
w \;\triangleq\; \|\boldsymbol{\delta}^{X}\|_1 + \|\boldsymbol{\delta}^{Z}\|_1.
\label{eq:w_depolarizing}
\end{equation}

When the hard decision at a \emph{single} VN $i$ changes, only a local subset of check residuals can flip, resulting in the following possibilities:
\begin{itemize}
\item If $\hat e^X_i$ flips (\ie, the decision changes between $\{I,X\}$ and $\{Y,Z\}$), then only
checks in $\mathcal{N}_X(v_i)$ can flip in $\boldsymbol{\delta}^{X}$;
\item If $\hat e^Z_i$ flips (\ie, the decision changes between $\{I,Z\}$ and $\{X,Y\}$), then only
checks in $\mathcal{N}_Z(v_i)$ can flip in $\boldsymbol{\delta}^{Z}$;
\item If both components flip (\eg, $I\leftrightarrow Y$ or $X\leftrightarrow Z$), both neighborhoods will change.
\end{itemize}
This locality enables efficient incremental maintenance of the RL state features defined next.

\subsection{RL State, Action, and Reward Under Depolarizing}
\label{subsec:rl_state_action_reward_depolarizing}

For each VN $v_i$, let there be local residual patterns
\[
\begin{aligned}
\mathbf{b}_i^{X} &\triangleq \big[\boldsymbol{\delta}^{X}\big|_{\mathcal{N}_X(v_i)},\,0,\ldots,0\big]\in\{0,1\}^{A_{\max}},\\
\mathbf{b}_i^{Z} &\triangleq \big[\boldsymbol{\delta}^{Z}\big|_{\mathcal{N}_Z(v_i)},\,0,\ldots,0\big]\in\{0,1\}^{A_{\max}},
\end{aligned}
\]
using any fixed ordering of neighbors, where we take a \emph{single} padding length
$A_{\max}\triangleq \max\{A_{\max}^{X},A_{\max}^{Z}\}$ (with $A_{\max}^{X}\triangleq \max_i A_i^X$ and
$A_{\max}^{Z}\triangleq \max_i A_i^Z$) and pad the shorter side with zeros.
By representing each pattern as a binary integer
\[
\sigma_i^{X} \triangleq \sum_{k=0}^{A_{\max}-1} b^{X}_{i,k}\,2^k,
\qquad
\sigma_i^{Z} \triangleq \sum_{k=0}^{A_{\max}-1} b^{Z}_{i,k}\,2^k,
\]
we combine them into a single state integer id
\begin{equation}
\sigma_i \;\triangleq\; \sigma_i^{X} + 2^{A_{\max}}\, \sigma_i^{Z},
\label{eq:state_joint}
\end{equation}
corresponding to the binary representation of $[\mathbf{b}_i^{Z},\mathbf{b}_i^{X}]$. Inside each BP iteration, the agent maintains a remaining set
$\mathcal{R}\subseteq\{1,\dots,n\}$ and chooses the next VN
$a\in\mathcal{R}$ to apply the SVNS update described in
Subsection~\ref{subsec:two_stream_svns}.
After selecting $a$, we remove it from $\mathcal{R}$.

Let $w_{\rm before}$ and $w_{\rm after}$ be the mismatch weights \eqref{eq:w_depolarizing}
immediately before and after applying the SVNS update at VN $a$
(including any consequent hard-decision change and residual updates).
We use the normalized reward
\begin{equation}
r_t \;=\; \frac{w_{\rm before}-w_{\rm after}}{A_a^{X}+A_a^{Z}},
\label{eq:reward_depolarizing}
\end{equation}
with a terminal bonus $+1$ if $w_{\rm after}=0$ (both syndromes satisfied). Using the same schedule lookahead as in the $X$-only setting,
we define
\[
\text{best\_future}\;=\;\max_{a'\in\mathcal{R}} Q(\sigma_{a'},a'),
\]
and update the tabular value function as
\[
Q(\sigma_a,a)\leftarrow (1-\alpha)Q(\sigma_a,a) + \alpha\Big(r_t+\gamma\,\text{best\_future}\Big).
\]
After learning, for inference we use the greedy policy
$a=\arg\max_{i\in\mathcal{R}}Q(\sigma_i,i)$.

Relative to the $X$-only RL-SVNS decoder, the only structural change under depolarizing noise is that
(i) each SVNS step updates \emph{two} LLR streams \eqref{eq:cn_update_x}-\eqref{eq:vn_extrinsic},
(ii) hard decisions are quaternary, yielding $(\hat{\mathbf{e}}^X,\hat{\mathbf{e}}^Z)$, and
(iii) the RL state and reward are driven by the \emph{pair} of residual mismatch vectors
\eqref{eq:deltaXZ} through \eqref{eq:w_depolarizing}-\eqref{eq:reward_depolarizing}.\footnote{Note that the state space is significantly larger than that of the X-only case; however, in our experiments, this was manageable for practical codes of interest.}
In addition, the outgoing VN$\to$CN LLRs are computed
from a quaternary-coupled extrinsic belief that depends on both streams (via \eqref{eq:Bex_X}-\eqref{eq:vn_extrinsic}).
All other RL components (action space, schedule-without-replacement, and Q-learning) remain unchanged.

\section{Numerical Results}\label{sec:numerical}

In our RL training (offline), we sample the channel crossover probability $p_x$ ($p$ for quaternary decoding) uniformly from the training grid
$P_{\text{train}}=\{0.03,0.04,0.05,0.06,0.07\}$, and run the SVNS decoder with learned scheduling for at most
$I_{\max} =100$ BP iterations per episode. We train for $E_{\max}=10^5$ episodes. The Q-table is initialized to zero and
updated using tabular Q-learning with learning rate $\alpha=0.1$ and discount factor $\gamma=0.9$.
Exploration follows an $\varepsilon$-greedy policy with $\varepsilon_0=0.6$, a minimum value
$\varepsilon_{\min}=0.05$, and a linear decay over episodes,
\[
\varepsilon_{\mathrm{iter}}
=\max\!\left(\varepsilon_{\min},\,\varepsilon_0\Big(1-\frac{\mathrm{iter}-1}{E_{\max}-1}\Big)\right),
\]
for $E_{\max}$ iterations.

\subsection{RL trained SVNS for Independent Pauli-$X$ Errors}

\begin{figure}[t]
  \centering
  \includegraphics[width=1\linewidth]{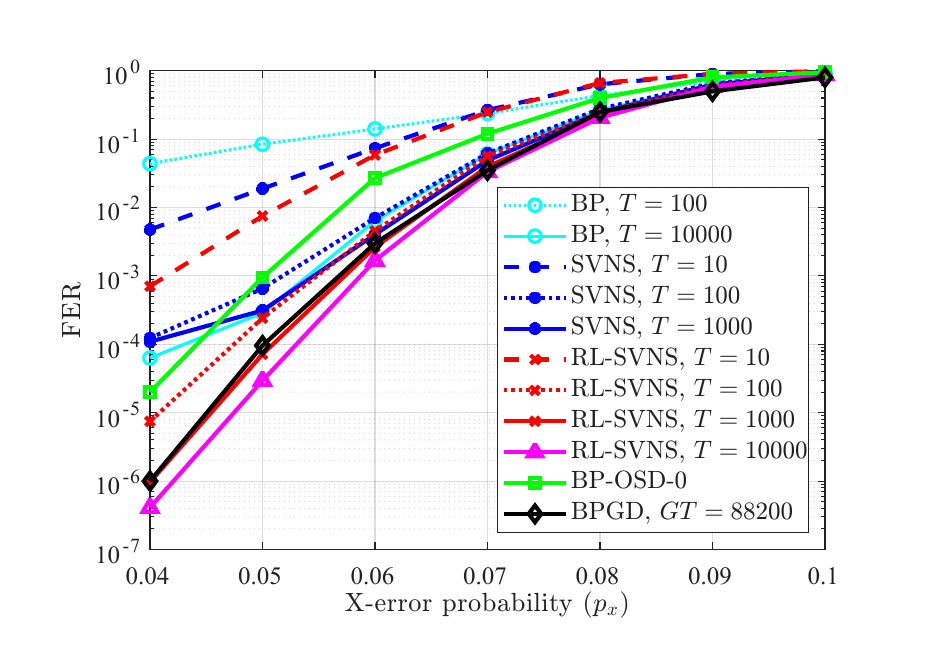}
  \caption{FER for the $[[882, 24, 18\leq d \leq 24]]$ code B1 with flooding BP and SVNS-BP vs. our proposed RL-SVNS-BP decoding algorithm.}
  \label{fig:B1_RL_SVNS}
\end{figure}

Fig.~\ref{fig:B1_RL_SVNS} plots the error-correction performance for our proposed RL trained SVNS (RL-SVNS) decoder along with various other decoding strategies for the B1 code \( [[882,24,18\le d\le 24]] \) \cite{panteleev2021degenerate}. First,  comparing conventional BP flooding decoding (teal circles) and SVNS (randomly selected sequential decoding with no learning, blue circles), we see similar saturating performance as the maximum number of iterations $T$ increases; however, the serial schedule generally converges significantly faster. For example, we observe that beyond roughly $100$ iterations, increasing $T$ yields little gain for SVNS. On the other hand, the RL-SVNS decoder (red X and magenta triangle curves) shows consistently improving performance with increasing $T$, surpassing that of both flooding and SVNS for small $p_x$. We note that, unlike the conventional BP decoders, the RL-SVNS does not appear to indicate an error floor in the simulated range, even with only $T=100$ iterations. 

For reference, the figure also includes the BP-OSD-0 performance (green squares) as well as the limiting performance of BPGD with $T=100$ (black diamond), which results in a global iteration cap of $GT = 88200$. Recall that, in BPGD, after each block of $100$ BP iterations, one decimation is performed, followed by another $100$ BP iterations, and so on, until convergence or until all $n=882$ bits have been decimated. As reported in~\cite{yao2023bpgd}, BPGD with BP using $T=10$ iterations achieves performance close to the $T=100$ variant, indicating that BPGD also exhibits saturation. Finally, we note that for larger $p_x$, BPGD typically requires decimating a large fraction of the bits, which increases the average latency and computational complexity.

\begin{table}[t]
\scriptsize
\centering
\caption{Average number of decoder iterations for standard BP and RL-SVNS on the \( [[882,24,18\le d\le 24]] \) code B1 for different \(X\)-error probabilities \((p_x)\), corresponding to Fig.~\ref{fig:B1_RL_SVNS}.}
\label{tab:avg_iters_b1_bp_rl}
\setlength{\tabcolsep}{5.5pt}
\renewcommand{\arraystretch}{1.15}
\begin{tabular}{|l||c|c|c|c|c|c|c|}
\hline
\multicolumn{1}{|c||}{Method} & \multicolumn{7}{c|}{$X$-error probability \((p_x)\)} \\ \cline{2-8}
\multicolumn{1}{|c||}{} & 0.04 & 0.05 & 0.06 & 0.07 & 0.08 & 0.09 & 0.10 \\ \hline\hline
BP, $T=100$   & 16 & 24.01 & 32.3 & 43.6 & 60.38 & 80.58 & 95.12 \\ \hline
BP, $T=10000$ & 20 & 36.27 & 129 & 729 & 2797 & 6364 & 8993 \\ \hline
RL, $T=100$         &  2.8 &  3.82 &  6.1 & 14.4 & 37.86 & 66.87 & 91.09 \\ \hline
RL, $T=10000$       &  2.8 &  4.19 & 24.9 & 345 & 2125 & 5371 & 8703\\ \hline
\end{tabular}
\end{table}
\begin{table}[t]
\scriptsize
\centering
\caption{Percentage of nonconvergence failures among total decoding errors for BP and RL-SVNS on the \( [[882,24,18\le d\le 24]] \) code B1. Results corresponding to Fig.~\ref{fig:B1_RL_SVNS}.}
\label{tab:nonconv_frac_b1}
\setlength{\tabcolsep}{6pt}
\renewcommand{\arraystretch}{1.15}
\begin{tabular}{|l||c|c|c|c|c|c|c|}
\hline
\multicolumn{1}{|c||}{Method} & \multicolumn{7}{c|}{$X$-error probability \((p_x)\)} \\ \cline{2-8}
\multicolumn{1}{|c||}{} & 0.04 & 0.05 & 0.06 & 0.07 & 0.08 & 0.09 & 0.10 \\ \hline\hline
BP, $T=100$        & 100 & 100 & 100 & 100 & 100 & 100 & 100 \\ \hline
BP, $T=10000$      & 100 & 99  & 100  & 100 & 100 & 100 & 100 \\ \hline
RL, $T=100$   & 95  & 97  & 100 & 99  & 100 & 100 & 100 \\ \hline
RL, $T=10000$ & 29  & 81  & 95  & 99  & 99  & 99  & 99  \\ \hline
\end{tabular}
\end{table}

Table~\ref{tab:avg_iters_b1_bp_rl} reports the average number of decoder iterations per frame for standard BP and RL-SVNS, corresponding to Fig.~\ref{fig:B1_RL_SVNS}. At low \(p_x\), where RL-SVNS achieves over two orders of magnitude lower error rate than standard BP, it also requires substantially fewer iterations on average, indicating both improved reliability and faster convergence in the low-noise regime.\footnote{We note that the complexity per iteration is not equal.  BP and RL-SVNS have exactly the same number of VN updates per iteration, but RL-SVNS has more CN updates since CN $c_j$  will be updated $|\mathcal{N}(c_j)|$ times.}  Table~\ref{tab:nonconv_frac_b1} reports the percentage of decoder failures due to nonconvergence among all decoding errors (\ie, logical errors plus nonconvergence events). As the table shows, RL-SVNS substantially mitigates the nonconvergence issue of the BP decoder for small $p_x$ (where flooding BP displays an error floor).

\subsection{RL-SVNS for Depolarizing Noise}
\begin{figure}[t]
  \centering
  \includegraphics[width=\linewidth]{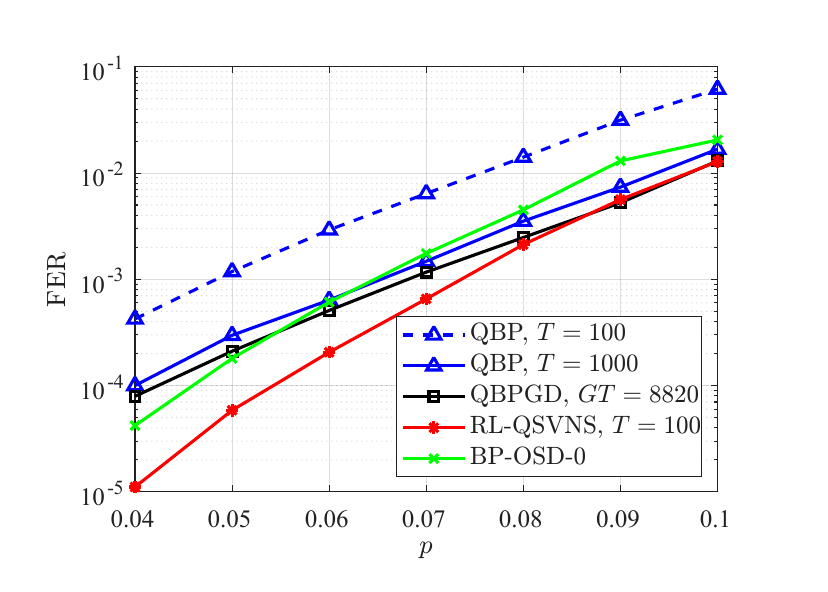}
  \caption{FER for the \( [[882,48,16]] \) code B2 over the depolarizing channel.}
  \label{fig:B2_RL_QSVNS}
\end{figure}

In Fig.~\ref{fig:B2_RL_QSVNS}, we report the FER performance of our proposed reinforcement learning SVNS decoder on the \([[882,48,16]]\) code B2 \cite{panteleev2021degenerate} under depolarizing noise. Across a range of physical error rates \(p\), the RL trained quaternary SVNS decoder (RL-QSVNS) yields more than an order-of-magnitude FER reduction compared to conventional quaternary BP (QBP) when both decoders run for \(100\) iterations. Moreover, RL-QSVNS improves upon the performance of the best possible quaternary BPGD (QBPGD) perforamnce and BP-OSD-0 in the plotted regime. We remind the reader that the RL-QSVNS decoding schedule was trained to jointly optimize both $X$ and $Z$ syndrome mismatches, which we believe accounts for the significant performance improvement observed.

\begin{figure}[t]
  \centering
  \includegraphics[width=\linewidth]{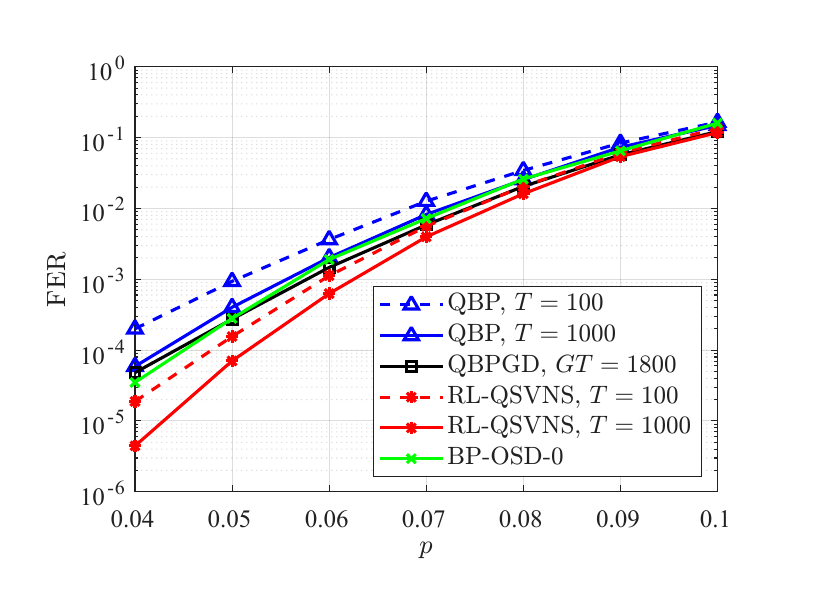}
  \caption{FER for the \( [[180,10,15 \leq d \leq 18]] \) code A5 over the depolarizing channel.}
  \label{fig:A5_RL_QSVNS}
\end{figure}

In Figs.~\ref{fig:A5_RL_QSVNS}-\ref{fig:BB288_QSVNS}, we repeat the same experiment for various QLDPC codes to generally validate the behavior for codes with different structure and parameters. Fig.~\ref{fig:A5_RL_QSVNS} reports the FER performance of our proposed RL-QSVNS decoder for the shorter \( [[180,10,15 \leq d \leq 18]] \) code A5 \cite{panteleev2021degenerate}. At the physical error rate \(p = 0.04\), the learned QSVNS decoder yields about an order-of-magnitude FER reduction compared to QBP when both decoders run for \(100\) iterations. Again, RL-QSVNS improves upon the performance of Q-BPGD and BP-OSD-0 in the plotted regime.

\begin{figure}[t]
  \centering
  \includegraphics[width=\linewidth]{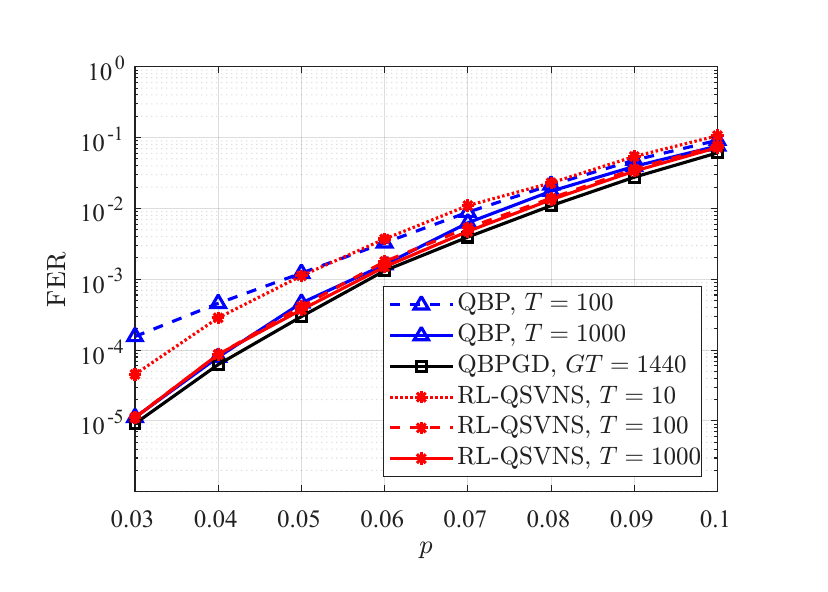}
  \caption{FER for the \( [[144,12,12]] \) BB code over the depolarizing channel.}
  \label{fig:BB144_QSVNS}
\end{figure}

In Fig.~\ref{fig:BB144_QSVNS}, we plot the FER performance of our proposed RL-QSVNS decoder on the bivariate bicycle (BB) \( [[144,12,12]] \) CSS code over the depolarizing channel. For comparison, we also include QBP with several iteration caps, as well as the QBPGD. For a small iteration budget, the RL-QSVNS decoder achieves lower FER than QBP, indicating faster convergence under tight latency constraints, but as the iteration cap increases, the performance of all methods becomes comparable. This suggests that this may be the limiting performance of BP-based methods. Finally, in Fig.~\ref{fig:BB288_QSVNS}, we plot the FER performance of our RL-QSVNS decoder on the longer \( [[288,12,18]] \) BB code for different iteration budgets. Here, we do see a performance advantage of RL (with similar convergence gains). In the same figure, we also report the performance of QBP with different iteration caps, as well as the performance of QBPGD. The results show that our decoder achieves about an order-of-magnitude improvement in FER compared to the other decoders with no indication of the error floor that appears to emerge for the other decoders.

\begin{figure}[t]
  \centering
  \includegraphics[width=\linewidth]{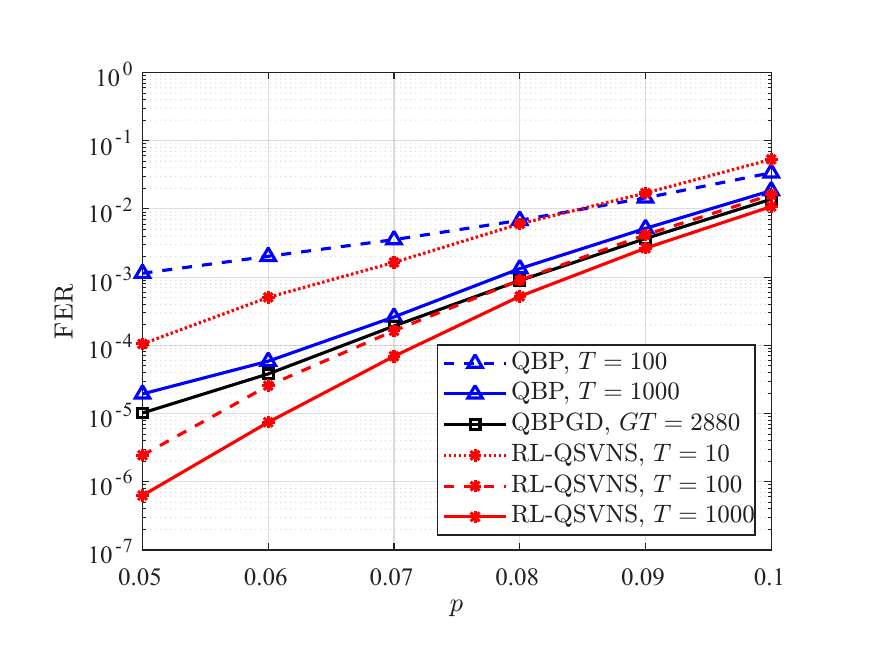}
  \caption{FER for the \( [[288,12,18]] \) BB code over the depolarizing channel.}
  \label{fig:BB288_QSVNS}
\end{figure}

\subsection{RL-QSVNS with Guided Decimation}

\begin{figure}[t]
  \centering
  \includegraphics[width=\linewidth]{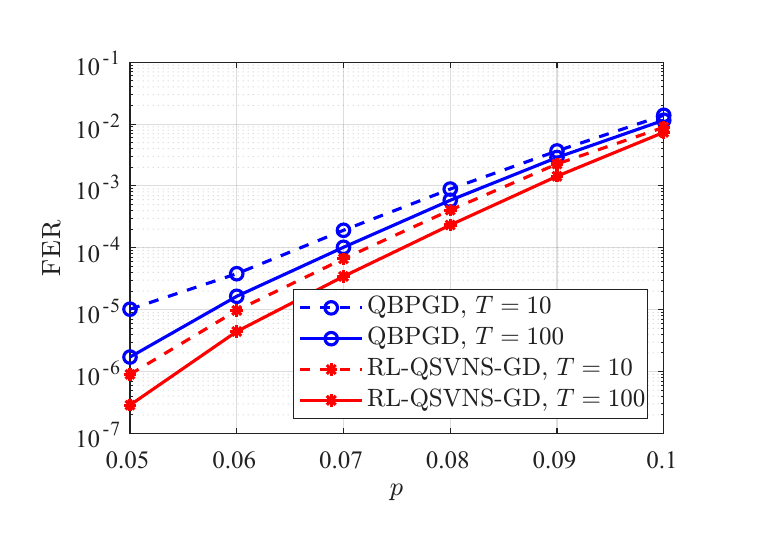}
  \caption{Performance of RL-QSVNS-GD (guided decimation and RL trained sequential decoding) vs. QBPGD for the \( [[288,12,18]] \) BB CSS code over the depolarizing channel.}
  \label{fig:BB288_RL_QSVNS_GD}
\end{figure}
\begin{figure}[t]
  \centering
  \includegraphics[width=\linewidth]{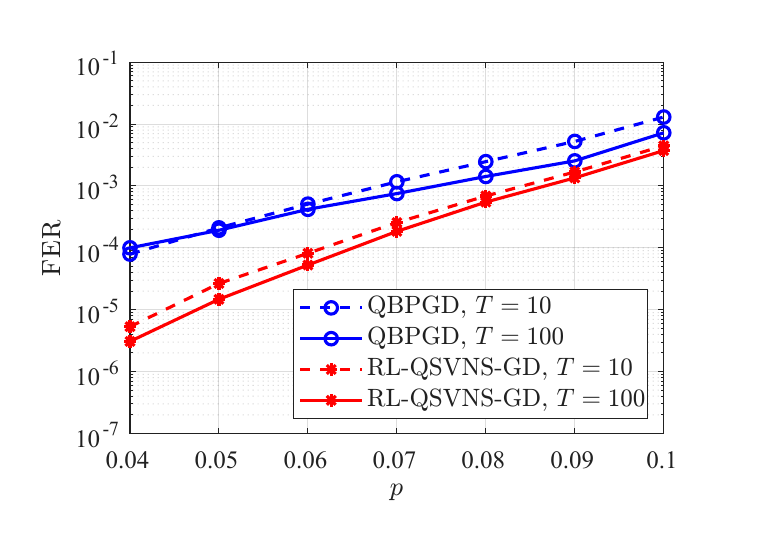}
  \caption{Performance of RL-QSVNS-GD for the \( [[882,48,16]] \) B2 code over the depolarizing channel.}
  \label{fig:B2_RL_QSVNS_GD}
\end{figure}

We also evaluated a hybrid variant in which we replace the flooding BP subroutine inside QBPGD with our RL-SVNS schedule, \ie, no new learning was performed. We refer to this decoder as RL-QSVNS with guided decimation (RL-QVNS-GD). Figs.~\ref{fig:BB288_RL_QSVNS_GD} and \ref{fig:B2_RL_QSVNS_GD} show the resulting FER for the \( [[288,12,18]] \) BB CSS code and the longer \( [[882,48,16]] \) B2 code, respectively, over the depolarizing channel. In both cases, the results reveal that incorporating guided decimation with RL-QSVNS yields a noticeable improvement over standard BPGD across various depolarizing probabilities \(p\).  These results indicate that the learned sequential updates supply more informative intermediate beliefs for the decimation mechanism.

Tables~\ref{tab:avg_decimations_bb288} and \ref{tab:avg_decimations_b2}  report the average number of decimation steps per decoding attempt for the two methods corresponding to Figs.~\ref{fig:BB288_RL_QSVNS_GD} and \ref{fig:B2_RL_QSVNS_GD}, respectively. The results show that replacing flooding (QBPGD) with learned sequential scheduling (RL-QSVNS-GD) substantially reduces the number of required decimation steps across all tested values of \(p\), with the gap widening as the channel becomes noisier. This reduction  suggests that RL-QSVNS delivers more reliable soft information during the guided-decimation process, thereby decreasing the amount of decimation needed to reach a valid correction.

\begin{table}[t]
\scriptsize
\centering
\caption{Average number of decimation steps per decoding attempt for standard QBPGD and RL-QSVNS-GD for the \( [[288,12,18]] \) BB CSS code over the depolarizing channel (corresponding to Fig.~\ref{fig:BB288_RL_QSVNS_GD}).}
\label{tab:avg_decimations_bb288}
\setlength{\tabcolsep}{6pt}
\renewcommand{\arraystretch}{1.15}
\begin{tabular}{|l||c|c|c|c|c|c|}
\hline
\multicolumn{1}{|c||}{Method} & \multicolumn{6}{c|}{Depolarizing probability \(p\)} \\ \cline{2-7}
\multicolumn{1}{|c||}{} & 0.05 & 0.06 & 0.07 & 0.08 & 0.09 & 0.10 \\ \hline\hline
BPGD, $T=10$             & 1.09 & 1.16 & 1.32 & 1.73 & 2.95 & 6.77 \\ \hline
BPGD, $T=100$             & 1.01 & 1.01 & 1.05 & 1.24 & 2.09 & 5.12 \\ \hline
RL, $T=10$      & 1.00 & 1.00 & 1.03 & 1.15 & 1.77 & 3.95 \\ \hline
RL, $T=100$      & 1.00 & 1.00 & 1.01 & 1.07 & 1.50 & 3.50 \\ \hline
\end{tabular}
\end{table}


\begin{table}[t]
\scriptsize
\centering
\caption{Average number of decimation steps per decoding attempt for standard QBPGD and RL-QSVNS-GD for the \( [[882,48,16]] \) code B2 over the depolarizing channel (corresponding to Fig.~\ref{fig:B2_RL_QSVNS_GD}).}
\label{tab:avg_decimations_b2}
\setlength{\tabcolsep}{6pt}
\renewcommand{\arraystretch}{1.15}
\begin{tabular}{|l||c|c|c|c|c|c|c|}
\hline
\multicolumn{1}{|c||}{Method} & \multicolumn{7}{c|}{Depolarizing probability \(p\)} \\ \cline{2-8}
\multicolumn{1}{|c||}{} & 0.04 & 0.05 & 0.06 & 0.07 & 0.08 & 0.09 & 0.10 \\ \hline\hline
BPGD, $T=10$        & 1.11 & 1.29 & 1.66 & 2.54 & 4.24 & 7.85 & 16.53 \\ \hline
BPGD $T=100$        & 1.09 & 1.18 & 1.40 & 1.72 & 2.44 & 3.80 & 8.09 \\ \hline
RL, $T=10$ & 1.00 & 1.02 & 1.05 & 1.14 & 1.42 & 2.15 & 4.54  \\ \hline
RL, $T=100$ & 1.00 & 1.00 & 1.01 & 1.02 & 1.15 & 1.55 & 2.68  \\ \hline
\end{tabular}
\end{table}

\section{Conclusions}\label{sec:conclusion}

In this paper, we proposed a novel decoding approach for QLDPC codes that learns the VN update order for sequential BP, yielding an RL trained sequential VN scheduling (RL-SVNS) decoder. To make RL-SVNS more efficient, we proposed a fast implementation that computes the RL states and update rewards using local, incrementally maintained quantities, and performs greedy action selection efficiently using a heap-based priority structure. Numerical simulation results show that the RL-SVNS decoder provides faster decoder convergence and substantial FER improvements over both flooding BP and random sequential scheduling for a selection of representative QLDPC codes. In addition, its performance is often substantially better than BP-OSD and BPGD decoding while maintaining a complexity comparable to standard BP. 

We also demonstrated that the approach was modular, \ie, the proposed RL-SVNS schedule can be combined with other techniques, providing a further way to improve both convergence behavior and end-to-end decoding performance. We demonstrated this by combining 
quaternary BPGD with RL-SVNS. 
The resulting hybrid decoder (RL-QSVNS-GD) achieves a noticeable performance gain over standard QBPGD under the same global iteration cap, while also requiring significantly fewer decimation steps on average. 

\section{Acknowledgment}
This material is based upon work supported by the National Science Foundation under Grant No. CCF-2145917.

\bibliographystyle{IEEEtran}

\bibliography{bibliography}

\end{document}